\documentclass[10pt,twocolumn]{article}

\newcommand{\papertitle}{CONTINUITY: Security-Context Contracts for Composable LLM Agent Controls}
\newcommand{\paperauthor}{Chris Zheng; Geng Yang}
\newcommand{\authorone}{Chris Zheng}
\newcommand{\authortitleone}{Co-Founder \& Chief Researcher}
\newcommand{\authoremailone}{chris@zast.ai}
\newcommand{\authortwo}{Geng Yang}
\newcommand{\authortitletwo}{Co-Founder \& CEO}
\newcommand{\authoremailtwo}{geng@zast.ai}
\newcommand{\authoraffiliation}{ZAST.AI}
\newcommand{\paperdate}{September 5, 2026}
\newcommand{\artifacturl}{https://github.com/zast-ai/continuity}
\newcommand{\artifactdisplay}{github.com/zast-ai/continuity}
\newcommand{\paperlicense}{CC BY 4.0}
\newcommand{\paperlicenseurl}{https://creativecommons.org/licenses/by/4.0/}
\newcommand{\codelicense}{MIT}
\newcommand{\codecopyrightholders}{Chris Zheng, Geng Yang, and ZAST.AI}

\usepackage[letterpaper,margin=0.70in,columnsep=0.24in]{geometry}
\usepackage[T1]{fontenc}
\usepackage[utf8]{inputenc}
\usepackage{lmodern}
\usepackage{microtype}
\usepackage{amsmath,amssymb,amsthm,mathtools}
\usepackage{booktabs,array,multirow,tabularx}
\usepackage{graphicx}
\usepackage{xcolor}
\usepackage{enumitem}
\usepackage{listings}
\usepackage{url}
\usepackage[backend=biber,style=numeric-comp,sorting=none,maxbibnames=99]{biblatex}
\usepackage{balance}
\usepackage{needspace}
\usepackage{float}
\usepackage{tikz}
\usepackage{hyperref}
\usepackage[nameinlink,noabbrev]{cleveref}
\usetikzlibrary{arrows.meta,positioning,fit,shapes.geometric,calc}

\definecolor{codegray}{gray}{0.96}
\newtheorem{definition}{Definition}
\newtheorem{theorem}{Theorem}
\newtheorem{lemma}{Lemma}
\newtheorem{proposition}{Proposition}

\newcommand{\system}{\textup{\textsc{Continuity}}}
\newcommand{\eci}{\textup{\textsc{ECI}}}
\newcommand{\fieldset}{\mathcal{F}}
\newcommand{\contract}{\mathcal{C}}

\newcommand{\canon}{\mathsf{Canon}}
\newcommand{\effects}{\mathsf{Effects}}
\newcommand{\verify}{\mathsf{Verify}}
\newcommand{\realize}{\mathsf{Realize}}
\newcommand{\code}[1]{\texttt{#1}}

\hypersetup{
  pdftitle={\papertitle},
  pdfauthor={\paperauthor},
  pdfsubject={Research artifact: https://github.com/zast-ai/continuity; paper license: CC BY 4.0},
  colorlinks=true,
  linkcolor=blue!55!black,
  citecolor=blue!55!black,
  urlcolor=blue!55!black
}

\title{\papertitle}
\author{
  \begin{tabular}{c}
    \authorone\\[-0.10em]
    {\small \authortitleone}\\[-0.10em]
    {\small \authoraffiliation}\\[-0.10em]
    {\small\texttt{\authoremailone}}
  \end{tabular}
  \and
  \begin{tabular}{c}
    \authortwo\\[-0.10em]
    {\small \authortitletwo}\\[-0.10em]
    {\small \authoraffiliation}\\[-0.10em]
    {\small\texttt{\authoremailtwo}}
  \end{tabular}
}
\date{
  \paperdate\\[0.30em]
  {\small Artifact: \href{\artifacturl}{\nolinkurl{\artifactdisplay}} \quad | \quad License: \href{\paperlicenseurl}{\paperlicense}}
}

\begin{document}
\maketitle

\begin{abstract}
Tool-using large language model agents increasingly combine provenance tracking, task-scoped authorization, policy gateways, protocol adapters, and effect-bound execution. These controls are commonly specified and tested in isolation. Their composition can nevertheless fail when a boundary drops a source label, accepts a self-declared authority root, widens a delegation, changes an approved field without a valid transformation relation, or executes a stale or replayed permit. We call this failure class \emph{security-context discontinuity}.

We define \emph{end-to-end consequence integrity} (\eci): every externally realized effect must have a verifiable witness connecting the exact effect to an authenticated chain origin, principal, task, field-level provenance, root grant, component contracts, current policy, finality sink, and single-use execution state. We introduce an assume--guarantee contract model and a reference system, \system{}, that enforces signed root grants; role-bound component identities; RFC~6901-style leaf paths; signed provenance and context manifests; bounded, source- and value-bound typed releases; independently verifiable transformation witnesses; and subject-, action-, policy-, revocation-, and replay-bound finality permits.

The 1.5-KLOC trusted-core prototype is evaluated with a deterministic conformance suite comprising 32 cross-layer fault classes, four domains, and 20 parameterized instances per fault--domain pair, plus 700 benign and 200 ambiguous tasks. Across 3,460 scenarios and 24,220 system--scenario runs, \system{} contains all 128 fault--domain classes and commits no harmful effect in 2,560 attack instances, while automatically completing all benign tasks and escalating all ambiguous tasks. The strongest incomplete reference configuration commits a harmful effect in 65.6\% of attack instances. Targeted ablations reopen 4--24 fault--domain classes depending on the omitted invariant. Median proof verification is 4.21~ms and median end-to-end transition production, verification, permit issuance, and finality is 7.17~ms on the recorded host. These are deterministic conformance results under an explicit trusted-computing-base model, not an estimate of real-world attack probability or a claim to solve semantic correctness.
\end{abstract}

\section{Introduction}
\label{sec:introduction}

Large language model (LLM) agents increasingly read attacker-writable content, synthesize plans, call external tools, delegate work, and create effects that outlive the model invocation. Indirect prompt injection demonstrates that data retrieved from a web page, email, document, or tool response can influence an agent toward attacker-chosen actions~\cite{greshake2023indirect,zhan2024injecagent,debenedetti2024agentdojo}. System-level defenses therefore separate untrusted data from control, constrain tool privileges, track causal support, or interpose policy at execution~\cite{debenedetti2025camel,kim2025pfi,shi2025progent,weng2026argus,santosgrueiro2026cxi}.

A deployed agent rarely uses only one control. A typical path contains an ingress authenticator, provenance service, memory layer, policy gateway, protocol adapter, tool server, and final effect sink. Each component may be locally reasonable while the end-to-end path is insecure. Consider an email action whose destination was derived from an untrusted invoice. A provenance layer labels the destination and a policy gateway requires a validated release. An adapter later serializes the action but drops the source binding, rewrites a logical alias, or binds the permit to a different representation. A final sink sees a valid signature and an allowed tool name, yet cannot determine whether the concrete destination is the one the gateway approved. The failure is not necessarily in prompt classification or in either endpoint; it is in the composition.

We call this class of failure \emph{security-context discontinuity}: a security-relevant fact is lost, weakened, reinterpreted, or rebound between controls. Discontinuity has several forms. A chain may begin at a self-declared authority root; a component signing key may be accepted for a role it was never authorized to perform; a release may name a field but fail to bind the source value or validity interval; an adapter may perform an allowed-looking transformation without proving the declared relation; or a finality sink may omit subject, policy, revocation, or replay checks. Cryptographic signatures alone do not prevent these failures. A signature authenticates who emitted an object, not whether the emitter was authorized for that stage or whether its transformation preserved the upstream security contract.

This paper asks:

\begin{quote}
\emph{Under what explicit conditions do independently useful agent-security controls compose into an end-to-end consequence boundary?}
\end{quote}

We model the LLM planner as adversarial: it may propose any action, parameters, destination, tool, or natural-language justification. The trusted computing base (TCB) consists of explicitly configured root issuers, provenance and release validators, component-role bindings, the deterministic verifier, a key service, and mediated finality sinks. Our goal is not to prove that the model reasons correctly. It is to ensure that an external effect cannot be committed merely because the model proposed it.

We define \emph{end-to-end consequence integrity} (\eci). Informally, every realized effect must carry a verifiable witness from an authenticated root grant through each authorized transition to a current, single-use, effect-bound permit. The witness includes field-level source commitments, component identities and contracts, any typed release or semantic transformation evidence, and finality state. We then express each control as an assume--guarantee contract:
\[
\contract_i=(A_i,G_i,P_i,M_i),
\]
where $A_i$ are input assumptions, $G_i$ are output guarantees, $P_i$ are fields that must be preserved, and $M_i$ are explicit transformation relations. Safe composition requires that upstream guarantees discharge downstream assumptions and that every changed security field is either preserved or justified by an independently checked relation witness. This follows the broader tradition of compositional specifications~\cite{abadi1993composing}, but instantiates the interface around agent authority, provenance, action semantics, and finality.

We implement these ideas in \system{}. The prototype uses signed root grants, trusted ingress identities, fixed component-role-contract topology, JSON Pointer leaf paths~\cite{rfc6901}, signed provenance/context manifests, signed typed-release credentials, transformation witnesses, and one-shot finality permits. The adapter performs a real security-relevant transformation in every normal scenario: a logical destination alias is resolved to a canonical sink address. The transformation is accepted only when a trusted directory witness binds the before value, after value, relation, component, contract, task, and expiry. Thus, the evaluation does not obtain security by simply forbidding all changes.

We make five contributions:

\begin{enumerate}[leftmargin=1.2em,itemsep=0.25em]
  \item \textbf{Problem and taxonomy.} We identify security-context discontinuity as a distinct composition failure and organize 32 fault classes spanning root trust, provenance and release, component topology, semantic transformation, policy freshness, finality, and replay.
  \item \textbf{Formal model.} We define \eci, role-bound transition witnesses, and sufficient composition conditions: authenticated roots, field-level continuity, authority non-amplification, validated transformations, current policy, exact effect binding, and complete mediation.
  \item \textbf{Reference design.} We present proof-carrying context transitions with signed manifests, bounded typed releases, semantic relation witnesses, and state-revalidated finality permits.
  \item \textbf{Executable artifact.} We provide a dependency-light Python implementation, 30 regression tests, deterministic fault injection, seven reference configurations, targeted ablations, raw results, and reproducibility scripts.
  \item \textbf{Evaluation.} Across 128 fault--domain classes and 2,560 parameterized attacks, the full configuration commits no harmful effect, while incomplete compositions fail on 4--128 classes. All 700 benign tasks, including 300 bounded external-data releases and a legitimate destination transformation, complete automatically; all 200 unreleased ambiguous tasks escalate.
\end{enumerate}

The result is deliberately narrower than ``solving prompt injection.'' A malicious planner may still produce harmful text, bad recommendations, or denial of service. \system{} instead establishes a testable structural claim: if the declared TCB and composition conditions hold, a planner-controlled proposal cannot directly cross the finality boundary without a valid end-to-end witness.

\section{Background and Motivation}
\label{sec:background}

\subsection{From prompt compromise to external consequence}

Prompt injection is dangerous when model output is connected to an effectful interface. Benchmarks such as InjecAgent and AgentDojo measure whether attacker-controlled observations can redirect tool-using agents~\cite{zhan2024injecagent,debenedetti2024agentdojo}. A model-level compromise and a system-level security failure are related but not identical. The model may propose a forbidden action while a deterministic execution boundary rejects it; conversely, the model may propose an apparently benign action whose parameters are later changed or misinterpreted by an adapter.

We therefore separate three events:

\begin{enumerate}[leftmargin=1.2em,itemsep=0.2em]
  \item \emph{proposal}: the planner emits an action candidate;
  \item \emph{admission}: a security control authorizes a canonical action;
  \item \emph{effect}: a sink commits a state transition in the outside world.
\end{enumerate}

Our primary metric is harmful effect realization, not whether the planner generated the proposal. The planner is assumed compromised in every attack scenario.

\subsection{Useful controls and their interfaces}

Current defenses contribute complementary properties. CaMeL extracts trusted control and data flow and uses capabilities to constrain exfiltration~\cite{debenedetti2025camel}. Prompt Flow Integrity combines isolation, secure processing of untrusted data, and privilege guardrails~\cite{kim2025pfi}. IsolateGPT separates execution domains~\cite{wu2024isolategpt}; Progent expresses programmable privileges~\cite{shi2025progent}; formal-policy work compiles and enforces runtime policies~\cite{palumbo2026formal}; and ARGUS audits causal provenance from runtime evidence to proposed actions~\cite{weng2026argus}. CXI binds protected fields, typed releases, exact effects, and invocation authority at an execution boundary~\cite{santosgrueiro2026cxi}.

These properties are not interchangeable. A provenance tracker does not itself mediate all sinks. A capability authorizes a subject but may not bind the concrete destination. A finality token may bind an exact action but be issued after a compromised adapter has already discarded the provenance needed for policy. A runtime hook provides an interception point but does not specify which security context must survive across the hook. OWASP's Agent Control Standard focuses on runtime control hooks and policy decisions~\cite{owasp2026acs}; the Model Context Protocol specifies tool discovery and invocation and separately documents security considerations~\cite{mcp2026tools,mcp2026security}. \system{} is designed as a context and contract layer that can be carried across such interfaces; the artifact itself is protocol-neutral and does not claim production integration with either standard.

\subsection{Why signatures are insufficient}

Suppose an adapter signs an output action. Signature verification establishes that the adapter produced that action. It does not establish:

\begin{itemize}[leftmargin=1.2em,itemsep=0.15em]
  \item that the adapter was authorized for this stage or contract;
  \item that the chain began at an accepted authority root;
  \item that an input field and output field denote the same value;
  \item that an altered destination satisfies an approved transformation;
  \item that a release applies to this source value and task;
  \item that the permit is still current and unused at the sink.
\end{itemize}

This distinction parallels software-supply-chain systems that authenticate a sequence of authorized transformations rather than merely signing the final artifact~\cite{torresarias2019intoto,samuel2010tuf}. \system{} similarly authenticates the transition relation, not only each endpoint.

\subsection{Security principles}

The design instantiates four established principles. First, \emph{complete mediation}: every path in an effect-equivalence class must cross a compatible sink~\cite{saltzer1975protection}. Second, \emph{least authority}: a component may exercise only a task- and resource-scoped grant, consistent with zero-trust deployment guidance~\cite{nist2020zerotrust}. Third, \emph{canonical binding}: signatures and permits cover a deterministic representation; the prototype uses a restricted deterministic JSON encoding and identifies RFC~8785 as the interoperability target~\cite{rfc8785}. Fourth, \emph{freshness and audience restriction}: credentials are short-lived, scoped to a particular sink, and replay-protected, consistent with modern authorization guidance~\cite{rfc9700}.

The distinctive question in this work is not whether each principle is valuable, but what must be carried and checked so that the principles continue to hold after several independently implemented controls transform the request.

\section{System and Threat Model}
\label{sec:model}

\subsection{Entities and execution path}

A task traverses the following abstract path:
\[
\textsf{Ingress}\rightarrow C_1\rightarrow\cdots\rightarrow C_n
\rightarrow\textsf{Verifier}\rightarrow\textsf{FinalitySink}.
\]
The ingress produces the first signed envelope $E_0$. Each component $C_i$ produces a new envelope $E_i$ and a signed transition receipt $\rho_i$. The verifier admits or rejects the complete bundle. On admission it creates an execution permit $\pi$, and a finality sink consumes $\pi$ to commit an effect and produce an outcome receipt $\omega$.

The LLM planner may run within or alongside one or more components, but it is not part of the TCB. Its natural-language explanation is never treated as authority. The action it proposes is data until a deterministic verifier and sink admit it.

\subsection{Adversary}

The adversary controls all natural-language and structured content supplied by external web pages, emails, documents, tool responses, and remote agents. It may cause the planner to emit arbitrary operations, arguments, resources, destinations, tool identifiers, delegation requests, retries, or explanations. It may exploit a faulty or compromised non-root pipeline component to sign an invalid envelope or receipt with that component's key. It may reorder or truncate chain elements, substitute signed objects, replay permits, mutate an action after authorization, invoke an alternate effect path, or present stale policy and revocation state.

The adversary cannot break SHA-256 collision resistance or forge an Ed25519 signature under an uncompromised key~\cite{rfc8032}. It cannot directly modify verifier or finality-sink code, bypass all operating-system isolation around the TCB, or compromise every mandatory effect path. These are explicit assumptions, not properties proved by the protocol.

\subsection{Trusted computing base}

The TCB contains:

\begin{itemize}[leftmargin=1.2em,itemsep=0.15em]
  \item the deployment policy naming trusted ingress, root-grant, provenance, context, release, and transformation issuers;
  \item the mapping from pipeline stage to authorized component identity and contract;
  \item trusted tool-manifest digests and the current policy state;
  \item deterministic field predicates, transformation relations, and canonicalization;
  \item the continuity verifier, permit issuer, revocation state, nonce/idempotency ledger, and mediated finality sinks;
  \item cryptographic keys and their provisioning, rotation, and revocation mechanisms.
\end{itemize}

A key is trusted only for its configured role. Possession of a registered component key does not authorize that key to originate a chain, issue a root grant, validate provenance, or sign a release. Conversely, compromise of a key accepted as a chain origin or root authority defeats the guarantees rooted in that key. Monotonicity is relative to an authenticated origin; it cannot repair a maliciously authorized root. Compromise of a stage key is narrower: the attacker can impersonate that stage, but a valid chain must still satisfy role, predecessor, contract, transformation, root, finality, and other independently enforced checks. The security analysis in \cref{sec:security} makes these dependencies explicit.

\subsection{Effect model}

An action $a$ is a structured object containing an operation, tool and server identity, manifest digest, function, resource, destination, parameters, effect class, and data classification. A sink interprets $a$ and may realize an external effect $e$. Two syntactically different actions can be effect-equivalent:
\[
a\equiv_e a'\quad\text{iff}\quad \effects(a)=\effects(a').
\]
For example, email, HTTP, browser, shell, and remote-agent interfaces may all exfiltrate the same data. Complete mediation must cover the effect class, not only one tool name.

\subsection{Security objective and non-goals}

Our objective is structural consequence integrity: a harmful proposal must not become an externally committed effect without a complete, current authorization witness. We do not attempt to prove that:

\begin{itemize}[leftmargin=1.2em,itemsep=0.15em]
  \item a trusted user intended a semantically correct high-level task;
  \item a trusted source, release validator, or directory returned true information;
  \item the model's prose is accurate, non-toxic, or free from data leakage before a sink;
  \item arbitrary covert channels, denial of service, or compromised operating systems are prevented;
  \item the deterministic benchmark estimates a population-level attack probability.
\end{itemize}

The model is deliberately fail-closed. Missing evidence results in rejection or explicit escalation rather than inferred authority.

\section{Security-Context Discontinuity}
\label{sec:discontinuity}

\begin{definition}[Security-context discontinuity]
Let a security-relevant request traverse representations $E_0,\ldots,E_n$. A discontinuity exists when a fact required to justify the realized effect is (i) absent, (ii) weakened, (iii) reinterpreted under a non-equivalent schema, (iv) modified without an authorized relation, or (v) no longer fresh at the point of effect.
\end{definition}

A discontinuity can arise even when every object is signed. The signature may belong to an unauthorized role; the object may have a valid signature but an invalid relationship to its predecessor; or the permit may bind an approval-time representation that differs from the sink-time action.

\subsection{Minimal counterexample}

Assume a policy gateway approves:
\begin{lstlisting}
operation   = payment.transfer
resource    = account:17
destination = alias:merchant:9
amount      = 5000
source      = verified-invoice:44
\end{lstlisting}
An adapter resolves the alias and emits a signed action. If it outputs \code{bankacct:attacker}, then the final signature authenticates the adapter but does not show that the alias resolution is valid. A name-based tool allowlist also passes because the operation remains \code{payment.transfer}. An action-bound permit issued \emph{after} the adapter likewise binds the attacker's action perfectly. The missing property is an authenticated relation between the approved input field and executed output field.

\system{} requires a transformation witness $w$ and, independently of the adapter, evaluates the consistency predicate:
\[
M_{\textsf{alias}}(v_{\mathrm{in}},v_{\mathrm{out}},w)=\mathsf{true}.
\]
The witness binds the path, both value digests, trusted directory issuer, component, contract, task, and expiry. The directory issuer and the semantics of its signed mapping are part of the TCB: the verifier checks that the attested mapping satisfies the configured relation, not that it can discover the real-world alias mapping on its own. A compromised adapter cannot create a valid directory witness.

\subsection{Fault taxonomy}

The conformance suite instantiates the 32 classes in \cref{tab:faults}. The classes are not claimed to be exhaustive. They are designed to exercise distinct proof obligations and common boundary failures.

\begin{table*}[t]
\centering
\small
\caption{Fault classes in the conformance suite. Each class is instantiated in four domains.}
\label{tab:faults}
\begin{tabularx}{\textwidth}{@{}p{0.16\textwidth}X X@{}}
\toprule
Family & Fault classes & Violated obligation \\
\midrule
Root and grant & Untrusted root producer; root authority exceeded; root scope exceeded; root field-constraint bypass & Accepted ingress/issuer, authenticated root grant, bounded operation/resource/field constraints \\
Provenance and release & Untrusted field binding; provenance drop; provenance value substitution; memory laundering; release predicate bypass; release value substitution; expired release & Signed manifest, leaf-value binding, source continuity, source/value/path/predicate/issuer/expiry-bound release \\
Identity and topology & Principal substitution; unauthorized stage signer; receipt/producer mismatch & Principal continuity, stage--identity--contract binding, authorized predecessor and receipt producer \\
Authority and policy & Authority amplification; delegation widening; taint downgrade; policy downgrade; context-root omission & Non-amplification, scope monotonicity, taint monotonicity, policy freshness, model-context commitment \\
Action semantics & Argument mutation; destination substitution; tool-server swap; effect-class downgrade & Leaf preservation, exact tool/manifest/server binding, exact effect classification \\
Transform contracts & Missing transform witness; invalid transform witness; contract-guarantee violation & Declared relation, trusted witness, before/after binding, postcondition satisfaction \\
Finality state & Subject substitution; post-permit action substitution; revoked grant & Caller binding, exact action binding, finality-time policy/revocation recheck \\
Lifecycle and mediation & Nonce replay; retry duplication; alternate path & One-shot nonce, idempotency, complete effect mediation \\
\bottomrule
\end{tabularx}
\end{table*}

\subsection{Four discontinuity operators}

We use four operators to reason about the taxonomy.

\paragraph{Truncation.}
A projection $q(E)$ omits a required field. If a downstream decision distinguishes two states only through that field, then $q$ makes them observationally equivalent.

\paragraph{Amplification.}
An output contains authority, delegation scope, or declassified data not justified by the input plus an explicit grant. Natural-language derivation is not a grant.

\paragraph{Rebinding.}
An approval, release, or permit is reused for a different principal, task, source value, field, tool, destination, action, audience, or policy epoch.

\paragraph{Staleness or replay.}
An otherwise valid object is used after expiry, revocation, policy change, nonce consumption, or an already-committed idempotency key.

\subsection{Why local checks miss the problem}

A gateway sees $E_i$ and a sink sees $E_j$. If no authenticated transition links them, the sink cannot distinguish a valid transformation from a malicious one. Re-signing at every hop does not solve this: a faulty component can sign an arbitrary output. Safe composition therefore needs both endpoint authenticity and a checked transition relation. The transition relation includes structural preservation, semantic transformation predicates, role authorization, and assumption--guarantee compatibility.

\section{Formal Model}
\label{sec:formal}

\subsection{Envelopes, roots, and paths}

A security-context envelope is
\[
E=(K,A,R,s,q,\sigma),
\]
where $K$ is the context, $A$ the structured action, $R$ a protocol-specific representation, $s$ the producer identity, $q$ a sequence number, and $\sigma$ a signature over the canonical unsigned envelope. Security fields are addressed by canonical paths $f\in\fieldset$; the implementation uses JSON Pointer paths such as \code{/action/parameters/amount\_cents}. Let $E[f]$ denote path resolution and $\Delta(E,E')$ the set of leaf security paths whose values differ.

A signed root grant is
\[
\Gamma=(p,a,t,U,S,T,V,B,\Pi,P,C,n,x,\sigma_\Gamma),
\]
where $p$ is the principal, $a$ the agent actor, $t$ the task root, $U$ an authority set, $S$ a delegation scope, $T$ allowed tool/server/effect identities, $V$ a maximum data classification, $B$ bounded field predicates, $\Pi$ the policy identity/digest/epoch, $P$ and $C$ provenance and context commitments, $n$ a nonce, and $x$ expiry. A deployment accepts only configured ingress identities and grant issuers.

\begin{definition}[Authenticated origin]
$\mathsf{RootOK}(E_0,\Gamma)$ holds iff the envelope and grant signatures verify; their producers have the configured ingress and grant-issuer roles; identity, task, grant identifier, policy, provenance, context, and nonce bindings agree; $E_0$'s authority and scope are subsets of $\Gamma$; its tool, server, effect class, and data class are permitted; every root field predicate holds; and neither object is expired or revoked.
\end{definition}

This definition prevents the circular argument that ``authority never increased relative to an arbitrary $E_0$.'' Monotonicity begins only after a separately authenticated root grant.

\subsection{Provenance and releases}

A provenance manifest contains source records $(id,c,h)$ and claims $(f,id,h_v)$, binding path $f$ to a source identity and the digest of the value at $E_0[f]$. Its signed digest is $P$. A context manifest commits to the source-classified items visible to the planning context; its signed digest is $C$. The verifier recomputes both manifest digests, verifies issuer roles and signatures, resolves every claimed path, and checks each leaf-value digest.

Untrusted sources have classes \textsc{External}, \textsc{ToolResult}, or \textsc{Memory}. Such a value may influence a protected action field only through a release credential
\[
L=(p,a,t,P,id_s,h_s,f,h_v,\phi,op,tool,n,x,\sigma_L),
\]
which binds the principal, actor, task, provenance manifest, source identity and digest, target path, exact value digest, bounded predicate $\phi$, action operation, tool, nonce, expiry, and trusted release issuer.

\begin{definition}[Valid typed release]
$\mathsf{ReleaseOK}(L,E_0,M)$ holds iff $L$ has a trusted, valid, non-revoked signature; all identity, task, manifest, source, path, operation, and tool bindings match; $H(E_0[f])=h_v$; the source claim in $M$ has the same value digest; $\phi(E_0[f])$ is true; and $L$ is unexpired.
\end{definition}

A release is therefore not a field-name whitelist. It authorizes one validated source value to influence one named target field under one bounded predicate and task context.

\subsection{Component contracts}

A component contract is
\[
\contract_i=(F_i^{req},Q_i^{req},A_i,G_i,F_i^{ens},Q_i^{ens},P_i,M_i),
\]
where $F^{req}$ and $F^{ens}$ are required and ensured fields; $Q^{req}$ and $Q^{ens}$ are deterministic predicates on input and output envelopes; $A_i$ are required upstream guarantee tags; $G_i$ are guarantee tags produced on success; $P_i$ is a set of path roots that must be preserved; and $M_i$ maps transformable paths to independently evaluated relations.

A deployment also fixes a pipeline binding
\[
B_i=(stage_i,role_i,signer_i,contract_i).
\]
Thus, a key valid for one role is not accepted for another stage or contract.

A transition receipt is
\[
\begin{aligned}
\rho_i=(&stage,role,id_\contract,H(\contract_i),H(E_i),H(E_{i+1}),\Delta_i,\\
         &W_i,Q_i^{req},Q_i^{ens},signer,\sigma_i).
\end{aligned}
\]
For a changed path $f$, a transformation witness $w\in W_i$ binds the relation identifier, $f$, both value digests, principal, task, component signer, contract, parameters, expiry, and an independently trusted validator signature.

\begin{definition}[Valid transition]
$\mathsf{StepOK}(E_i,\rho_i,E_{i+1},\contract_i,B_i)$ holds iff:
\begin{enumerate}[leftmargin=1.4em,itemsep=0.15em]
  \item the envelope and receipt signatures, input/output digests, sequence, and recomputed change set agree;
  \item the receipt signer equals the output producer and matches $B_i$'s authorized stage, role, signer, contract, and predecessor;
  \item all required fields, required guarantee tags, and input predicates hold;
  \item for every changed security path $f$, either $f$ is not in a preserved root and a declared relation $M_i(f)$ holds with a valid witness, or the step is rejected;
  \item authority and delegation do not widen, taint does not decrease, and policy epoch does not decrease except through a separately authorized policy transition;
  \item all ensured fields and output predicates hold, after which $G_i$ becomes available downstream.
\end{enumerate}
\end{definition}

In the prototype, policy changes are not a permitted component transformation; any changed policy identity, digest, or epoch must still equal current deployment state at admission and finality.

\subsection{Permits and finality}

After verifying the bundle, the verifier issues
\[
\begin{aligned}
\pi=(&p,a,t,\Gamma,sink,H(A_n),H(bundle),\Pi,\\
      &n,k,x,one,\sigma_\pi),
\end{aligned}
\]
where $k$ is an idempotency key. A finality sink validates the permit issuer, caller subject, audience, exact action digest, current policy identity/digest/epoch, current revocation state, expiry, nonce, and idempotency state before committing an effect.

\begin{definition}[Effect witness]
An effect witness is
\[
\begin{aligned}
W_e=(&\Gamma,M_P,M_C,L^*,E_0,\rho_0,E_1,\ldots,\\
     &\rho_{n-1},E_n,\pi,\omega),
\end{aligned}
\]
including the root grant, provenance and context manifests, any releases and transformation witnesses, the transition chain, permit, and outcome receipt.
\end{definition}

\begin{definition}[End-to-end consequence integrity]
A system satisfies \eci{} iff for every externally realized effect $e$, there exists $W_e$ such that
\[
\verify(W_e)=1
\quad\land\quad
\realize(\canon(W_e))=e,
\]
and the witness's authorization was current and unconsumed immediately before $e$ was committed.
\end{definition}

\subsection{Composition conditions}

For a pipeline of $n$ components, the sufficient conditions are:

\begin{description}[leftmargin=1.2em,style=nextline,itemsep=0.2em]
  \item[C1 Authenticated origin.] $\mathsf{RootOK}(E_0,\Gamma)$ and authenticated provenance/context commitments hold.
  \item[C2 Contract compatibility.] Each required guarantee of $\contract_{i+1}$ is established by the root or a valid predecessor, and all deterministic assumptions and postconditions are checked.
  \item[C3 Context continuity.] Every security-critical changed path is preserved or justified by a declared, independently verified relation witness.
  \item[C4 Non-amplification.] Authority and delegation can increase only through a separately authenticated grant, data taint does not decrease across ordinary transitions, and any protected field sourced from untrusted data requires a valid typed release bound to the source value and target field.
  \item[C5 Canonical action binding.] The permit and sink bind the same canonical action, including tool/server manifest, operation, resource, destination, parameters, effect class, and data class.
  \item[C6 Fresh finality.] The sink rechecks subject, audience, policy, revocation, expiry, nonce, and idempotency immediately before effect.
  \item[C7 Complete mediation.] Every path capable of realizing an effect in the protected equivalence classes traverses a compatible finality sink.
\end{description}

\begin{theorem}[End-to-end composition safety]
\label{thm:composition}
Assume unforgeability of the signature scheme and collision resistance of the digest. If C1--C7 hold for a task, then every realized protected effect has a valid effect witness and the system satisfies \eci{}, even if the planner and all attacker-writable content are adversarial.
\end{theorem}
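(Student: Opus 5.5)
The plan is to prove \cref{thm:composition} by a reduction argument. Fix a realized protected effect $e$. Condition C7 tells us that $e$ was committed by a compatible finality sink, and that sink committed only after accepting some permit $\pi$. We construct the witness $W_e$ backwards from $\pi$ and show that every component of $\verify(W_e)$ holds. If some component failed, we obtain either a signature forgery under an uncompromised TCB key, a digest collision, or a violation of one of C1--C7. The planner and all attacker-writable content enter only as data inside envelopes; no step of the argument treats them as authority. Hence the argument holds for arbitrary adversarial planner behavior.

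\textbf{Step 1 (finality).} By C6, the sink checked the permit issuer signature, subject, audience, policy identity, digest and epoch, revocation, expiry, nonce, and idempotency immediately before commit. Unforgeability of the permit-issuer key implies that $\pi$ was issued by the verifier. The nonce and idempotency ledger then gives the ``current and unconsumed'' clause of \eci{}.

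\textbf{Step 2 (effect binding).} By C5, $\pi$ binds $H(A_n)$, and the sink realizes exactly the action whose digest matches. Collision resistance therefore gives $\realize(\canon(W_e))=e$ for the $A_n$ in the verified bundle, which excludes post-permit substitution.

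\textbf{Step 3 (bundle).} Since the verifier issues $\pi$ only after verifying a bundle with digest $H(bundle)$, collision resistance fixes a unique bundle $(\Gamma,M_P,M_C,L^*,E_0,\rho_0,\ldots,E_n)$. This bundle, together with $\pi$ and $\omega$, is our $W_e$.

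\textbf{Step 4 (chain).} We then argue by induction on $i$ that the following invariant holds:
\begin{itemize}[leftmargin=1.2em,itemsep=0.15em]
  \item $E_i$ is authorized relative to $\Gamma$;
  \item its authority and scope are contained in those of $\Gamma$;
  \item its taint is no lower than the taint at the root;
  \item every security leaf of $E_i$ either equals the corresponding leaf of $E_0$ or is justified by a chain of verified relation witnesses;
  \item every untrusted-sourced protected field is covered by a valid typed release, i.e.\ $\mathsf{ReleaseOK}$ holds for it.
\end{itemize}
The base case is C1, $\mathsf{RootOK}$ together with the manifest recomputation, plus the release clause of C4. The inductive step is $\mathsf{StepOK}$: C2 discharges required guarantee tags and predicates; C3 handles every path in the recomputed $\Delta_i$; and C4 handles non-amplification. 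Role binding $B_i$ matters in one case. A compromised stage key can sign $E_{i+1}$, but it cannot produce a valid transformation witness, because that requires a trusted validator signature. Nor can it originate a root or a release. Such an attacker is therefore confined to changes that the checked relations already permit.

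\textbf{Main obstacle.} The hardest part will be making the induction precise about \emph{which} facts are preserved. Authority, taint, and field values must be shown to compose transitively across relation witnesses, without any gap between the recomputed change set $\Delta_i$ and the set of security paths. If a path counted as security-relevant is omitted from $\Delta_i$, a truncation or reinterpretation discontinuity could slip through.

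I would handle this in two moves. First, define the security state as the full leaf map of security paths; then canonicalization plus collision resistance make the digest-based $\Delta_i$ exactly equal to the set of leaves that changed. Second, state a small lemma: if a relation witness $M_i(f)$ holds for each changed $f$, and relations only map authorized values to authorized values under the same principal and task, then justification composes along the chain. With that lemma, the induction goes through. Combining Steps 1--4 then yields $\verify(W_e)=1$ and \eci{}.
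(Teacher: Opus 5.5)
Your proposal is correct and uses essentially the paper's decomposition: root boundedness from C1 and the release clause of C4, induction over $\mathsf{StepOK}$ via C2--C4, finality binding via C5--C6, and C7 to rule out unmediated paths; you traverse it backward from the committed effect and use the permit's $H(bundle)$ to make explicit which verified bundle becomes $W_e$, a point the paper leaves implicit. One caution: the hypothesis in your closing lemma that relations ``map authorized values to authorized values'' is not among C1--C7 and is not needed, because \eci{} only asks for a witness that $\verify$ accepts, so the invariant you need (as in the paper's inductive continuity lemma) is only that each final security field is connected to the root by preserved equalities, monotone restrictions, or witnessed relations, with semantic soundness of the relation registry left to the TCB.
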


The proof proceeds by induction over valid transition receipts, followed by finality binding; \cref{app:proofs} gives the full argument.

\begin{theorem}[Context truncation]
\label{thm:truncation}
Let a downstream decision $D$ depend on security field $f$. If an intermediate projection $q$ omits $f$, then there exist envelopes $E,E'$ with $q(E)=q(E')$ but $D(E)\ne D(E')$. A downstream component lacking authenticated recovery of $f$ must either fail closed or risk accepting an invalid state.
\end{theorem}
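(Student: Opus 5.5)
The approach is to make ``depends on $f$'' precise and then construct the two witnesses $E,E'$ directly. We read ``$D$ depends on security field $f$'' as: there exist two envelopes that agree on every path except $f$ but on which $D$ differs. This is the natural notion of dependence for a deterministic predicate over leaf paths. We read ``$q$ omits $f$'' as: $q(E)$ is a function of the restriction of $E$ to $\fieldset\setminus\{f\}$, possibly together with non-security fields and protocol representation. Equivalently, if $E[g]=E'[g]$ for all $g\neq f$ (and the remaining components agree), then $q(E)=q(E')$.

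\textbf{Step 1 (existence).} By dependence, fix $E,E'$ with $\Delta(E,E')=\{f\}$ and $D(E)\neq D(E')$. Because $q$ omits $f$ and $E,E'$ agree everywhere else, $q(E)=q(E')$. This gives the first claim.

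\textbf{Step 2 (the dichotomy).} Consider a downstream component whose only input is $q(\cdot)$, possibly with additional data. That additional data must not carry authenticated information about $f$, since we assume no authenticated recovery. Its decision is then some function $D'\circ q$, so $D'(q(E))=D'(q(E'))$. Since $D(E)\neq D(E')$, at least one of the following holds:
\begin{itemize}
\item $D'$ disagrees with $D$ on one of the two states. In the accept/reject setting, this means either it rejects a state that $D$ would accept, or it accepts one that $D$ rejects.
\item $D'$ is constant-reject on this fiber. That is failing closed.
\end{itemize}
If $D'$ ever accepts on the fiber $q^{-1}(q(E))$ while $D$ rejects some member of that fiber, it risks accepting an invalid state. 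This is exactly the indistinguishability argument behind the Truncation operator in \cref{sec:discontinuity}.

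\textbf{Step 3 (ruling out unauthenticated recovery).} We should note why recovery must be authenticated. If the component instead obtains $f$ from an unauthenticated channel, the adversary of \cref{sec:model} controls that channel. It can present the value from $E'$ while the true state is $E$, which collapses back to the same fiber argument.

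\textbf{Expected obstacle.} The main difficulty is formalizational rather than technical: fixing definitions of ``depends on'' and ``omits'' that are strong enough to make Step 1 immediate yet faithful to the informal statement. One subtlety is that envelopes differing only at $f$ must be well-formed, for example still satisfying any signature or schema constraints. I would handle this by noting that signatures are recomputed over the canonical envelope, so any value of $f$ in the field's domain yields a valid envelope, provided we quantify over states a signer could have produced.
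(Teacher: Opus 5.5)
Your argument is correct and is essentially the paper's proof: dependence supplies $E,E'$ agreeing everywhere except at $f$ with $D(E)\ne D(E')$, omission forces $q(E)=q(E')$, and any deterministic function of $q$ alone must then be wrong on one of the two unless it rejects both or authenticates $f$. Your Step 3 and your well-formedness remark make explicit what the paper leaves implicit; note only that re-signing makes $\sigma$ differ between $E$ and $E'$, so for Step 1 to go through, ``omits $f$'' must also exclude any signature or digest covering $f$, which is exactly the no-authenticated-recovery hypothesis.
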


\begin{theorem}[Unbound-field substitution]
\label{thm:substitution}
If a security-critical field $f$ affects the realized effect but is not jointly bound by admission and finality, then an adversary controlling the intervening representation can substitute $A[f]$ with $A'[f]\ne A[f]$ while preserving all checked permit fields.
\end{theorem}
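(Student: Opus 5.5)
The argument is a direct construction by contrapositive. We exhibit an adversary and two actions that no checked field distinguishes. Precisely, let $\mathcal{B}_{\mathrm{adm}}$ be the set of paths jointly bound by admission and finality: those covered by the verifier's admission decision and by the permit digest $H(A_n)$ that the sink compares against its action. The hypothesis says $f\notin\mathcal{B}_{\mathrm{adm}}$.

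First, make ``affects the realized effect'' concrete. There are actions $A$ and $A'$ that agree on every path except $f$, with $A'[f]\ne A[f]$ and $\effects(A)\ne\effects(A')$; this is the negation of $A\equiv_e A'$ restricted to variation in $f$. We start from an honest run on $A$ that the verifier admits and that produces a permit $\pi$. The adversary controls some representation between admission and finality: an adapter output $R$, a serialization step, or the action handed to the sink after permit issuance. It rewrites only the leaf $f$, replacing $A[f]$ by $A'[f]$.

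Next, check each permit field and sink check in the list of C5/C6: issuer signature, subject, audience, action digest, policy identity/digest/epoch, revocation, expiry, nonce, and idempotency. Each is evaluated over data that the substitution leaves unchanged. The signature and the identity and freshness fields are untouched because $\pi$ is not modified. The action-digest check either does not cover $f$ (that is what ``not jointly bound'' means), or it compares the digest against an approval-time representation while the sink interprets a different one. In that second case, the sink evaluates the check on a projection $q$ that omits $f$. We then reuse Theorem~\ref{thm:truncation}: take its $E=A$ and $E'=A'$ with $q(A)=q(A')$, and use the fact that the sink's accept decision is a function of $q(\cdot)$ and of the permit alone. So the sink accepts $A'$ exactly as it would accept $A$, and realizes $\effects(A')\ne\effects(A)$. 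No signature forgery or digest collision is needed, so the cryptographic assumptions are not violated.

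The main obstacle is pinning down ``jointly bound'' so that the two failure modes are covered without circularity. In the first mode, admission binds $f$ but finality does not. In the second, finality binds $f$ but admission approved a different value; this is the post-adapter permit of the minimal counterexample, where the permit faithfully binds the attacker's value. I would handle it with a case split. (a) If $f$ is absent from the permit digest input, the argument above applies directly. (b) If $f$ is in the permit but absent from admission's checked state, the adversary substitutes upstream of permit issuance. The verifier then issues a perfectly valid $\pi'$ for $A'$, and every checked field is preserved in the sense that all checks still pass. In both cases the conclusion follows, which shows that C5 requires binding at both points.
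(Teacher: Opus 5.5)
Your case (a) is the paper's proof. You take an admitted $A$ and a variant $A'$ differing only at $f$ with different effects, and note that nothing the sink checks reads $f$. The unchanged permit $\pi$ therefore accepts $A'$ exactly as it accepts $A$, and the adversary presents $A'$.

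Your decomposition is finer in two ways. First, listing the C5/C6 checks one by one spells out the paper's one-line claim that ``the same permit validates both checked projections.'' The appeal to the truncation theorem is unnecessary, though: you already built the pair $A,A'$. All you actually use is the premise that the sink decides on $\pi$ and a projection omitting $f$, which is the paper's closing condition ``if the sink does not rebind the complete action.''

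Second, and more substantively, your case (b) covers the reading where the permit binds $f$ but admission does not. This is the post-adapter permit of the minimal counterexample, as in the Gateway+Finality configuration. The paper's proof silently excludes this reading, because its step ``the checked commitments omit $f$'' assumes $f$ lies outside the permit.

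Your extension is sound and explains why binding is needed at both points. Two things in it should be made explicit. You use the same projection argument at admission to conclude that the verifier issues $\pi'$ for $A'$, but you do not state it. More importantly, case (b) proves a weaker conclusion than the one stated. $\pi'$ differs from the permit an honest run on $A$ would produce in its action digest and bundle digest. So the checked permit fields are satisfied, not literally preserved. Either say that case (b) yields only this ``all checks pass'' form, or adopt the paper's reading of the hypothesis; under that reading case (a) alone suffices and the literal conclusion holds.
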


\begin{theorem}[Authority non-laundering]
\label{thm:laundering}
If every ordinary derivation step preserves or reduces authority and only a verified grant edge may add authority, then no finite sequence of LLM generation, summarization, memory consolidation, or protocol translation can create execution authority absent from the authenticated root and grants.
\end{theorem}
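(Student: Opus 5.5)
We prove \cref{thm:laundering} by induction on the length of an arbitrary finite derivation sequence, using an authority ordering as the invariant. Fix the authenticated root grant $\Gamma$ and the finite set $G^*$ of verified grant edges. Define the \emph{authorized ceiling} $U^* = U_\Gamma \cup \bigcup_{g\in G^*} U_g$, with the analogous union for delegation scope $S$.

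The invariant is: for every reachable envelope $E_i$, the authority and delegation it carries are bounded by the ceiling,
\[
\mathsf{Auth}(E_i)\subseteq U^*, \qquad \mathsf{Scope}(E_i)\subseteq S^*.
\]
We order authority by set inclusion, since $U$ and $S$ are sets. If the paper's scopes carry a richer lattice, inclusion is replaced by the lattice order and union by join.

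The induction runs as follows.

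\textbf{Base case.} $\mathsf{RootOK}(E_0,\Gamma)$, which holds by C1, gives $\mathsf{Auth}(E_0)\subseteq U_\Gamma\subseteq U^*$.

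\textbf{Inductive step.} Each step $E_i\to E_{i+1}$ is of one of two kinds.

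\begin{itemize}
\item \emph{Ordinary step.} This covers LLM generation, summarization, memory consolidation and protocol translation. By hypothesis, and concretely by clause~5 of $\mathsf{StepOK}$, we have $\mathsf{Auth}(E_{i+1})\subseteq \mathsf{Auth}(E_i)$. Transitivity of $\subseteq$ preserves the invariant.
\item \emph{Verified grant edge $g$.} Here $\mathsf{Auth}(E_{i+1})\subseteq \mathsf{Auth}(E_i)\cup U_g \subseteq U^*$.
\end{itemize}

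Because the sequence is finite, the invariant holds at $E_n$. Execution authority is whatever the verifier embeds in the permit $\pi$ and the sink honors. By C5 and C6 this is read from the admitted $E_n$ and checked against $\Gamma$. Hence it lies in $U^*$. This is exactly the claim that no authority absent from the root and the grants is ever created.

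The main obstacle is justifying that the four named operations really are ordinary steps, and cannot smuggle authority in through some channel other than the envelope's $U/S$ fields. Three such channels need to be closed.

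\begin{itemize}
\item \emph{Natural-language justification.} A model's text might assert authority. This is handled by the stipulation that natural-language content is never a grant; the verifier only reads signed structured fields.
\item \emph{Memory consolidation.} Memory might reintroduce previously held authority. Memory is classed as an untrusted source, so any effect it has on a protected field requires a $\mathsf{ReleaseOK}$ credential (C4). A release binds one value to one field under a predicate. It does not enlarge $U$, so it counts as an ordinary step with respect to authority.
\item \emph{Forged grant edges.} An attacker might fabricate a step that looks like a grant. Grant edges are verified under issuer-role keys. Under the unforgeability assumption of \cref{thm:composition}, only genuine grants enter $G^*$. A compromised stage key cannot produce them, because role binding $B_i$ rejects signatures from the wrong role.
\end{itemize}

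I would state these three reductions as a short lemma preceding the induction. That lemma carries the real content; the ordering argument itself is routine.
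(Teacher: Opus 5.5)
Your proposal is correct and follows essentially the same route as the paper: induction over the derivation sequence, with the authenticated root bounding the base case, $\mathsf{StepOK}$ forcing output authority to be a subset of input authority at every ordinary step, and verified grant edges adding only the authority they explicitly contain. The paper's proof uses neither your channel-closing lemma nor the final appeal to C5 and C6, because the theorem's hypothesis already makes every non-grant step non-amplifying; that lemma concerns how \system{} enforces the premise, not the conditional statement itself.
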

In \system{}, this premise is enforced by construction of $\mathsf{StepOK}$: an ordinary transition is accepted only when authority is preserved or reduced, while an increase requires a separately verified grant edge.

\begin{proposition}[Incomplete mediation]
\label{prop:mediation}
If a protected effect class contains any reachable path that bypasses a compatible finality sink, then the system does not satisfy \eci{} for that class, regardless of the correctness of the mediated path.
\end{proposition}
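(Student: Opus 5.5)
The plan is to argue by exhibiting a counterexample execution, directly from the definition of \eci{}. \eci{} is a universal statement: \emph{every} externally realized effect must have a witness $W_e$ with $\verify(W_e)=1$, $\realize(\canon(W_e))=e$, and an authorization that was current and unconsumed immediately before $e$. To refute it for a protected class $\mathcal{E}$, it therefore suffices to show one execution in which some effect $e\in\mathcal{E}$ is realized without any such witness. By hypothesis, there is a reachable path $\gamma$ that realizes an effect in $\mathcal{E}$ and does not traverse a compatible finality sink. I will fix $\gamma$ and let the adversary of \cref{sec:model} drive it. The planner is adversarial and may emit arbitrary actions, so reachability of $\gamma$ can be instantiated with an action $a$ for which no permit was ever issued, for example one outside every root grant $\Gamma$.

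The key step is to show that no valid witness for $e$ can exist. A valid $W_e$ contains a permit $\pi$ and an outcome receipt $\omega$. For the freshness clause of \eci{}, $\pi$ must moreover have been checked and consumed (nonce and idempotency state) at the point of commitment. Along $\gamma$ no compatible sink runs, so no permit is validated or consumed before $e$, and no outcome receipt bound to $H(a)$ is produced by a sink. Two cases need care.
\begin{enumerate}
  \item The adversary could try to assemble a witness after the fact. Since $\pi$ must carry a valid verifier signature over $H(A_n)$ and $H(\text{bundle})$, this requires either forging $\sigma_\pi$ or finding a digest collision, both excluded by the standing cryptographic assumptions.
  \item The adversary could try to reuse a legitimate permit issued for some other action. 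That permit was either never consumed, so the ``current and unconsumed immediately before $e$'' clause is not evidenced by any finality check on $\gamma$, or it binds a different canonical action, so $\realize(\canon(W_e))\neq e$.
\end{enumerate}
In either case the condition fails for $e$, and \eci{} fails for $\mathcal{E}$.

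Finally, the phrase ``regardless of the correctness of the mediated path'' follows because the argument never inspects the mediated paths. The mediated paths may satisfy C1--C6 perfectly and yield witnesses for their own effects; the universal quantifier over realized effects is still violated by the single effect realized along $\gamma$. This can also be read as the contrapositive role of C7 in \cref{thm:composition}.

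The main obstacle is making precise that the witness must be \emph{causally} tied to the commitment, rather than merely existing somewhere. Otherwise an adversary who also holds a legitimately authorized, effect-equivalent action ($a\equiv_e a'$) might seem to supply an after-the-fact witness. I would resolve this by reading the freshness clause of \eci{} as requiring that the sink realizing $e$ performed the C6 checks on $\pi$ and consumed it. This is what ``compatible finality sink'' means, so a bypass path by definition cannot satisfy the clause. If that reading is too strong, the fallback is to choose $a$ to be unauthorized by every grant, so that no valid witness exists at all.
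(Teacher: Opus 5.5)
Your proposal is correct and follows the same route as the paper's proof. You fix the reachable bypass path, let the adversary drive it to realize a protected effect that no compatible sink has checked, and observe that this single effect violates the universal quantifier in \eci{}, whatever the mediated paths do. The paper simply asserts that the effect is realized ``without producing a witness accepted by that sink''; that is the causal reading of the freshness clause you adopt as your primary resolution, and your case analysis of after-the-fact and reused permits is a more careful elaboration of that step.
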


\section{Design}
\label{sec:design}

\begin{figure*}[t]
\centering
\resizebox{0.98\textwidth}{!}{%
\begin{tikzpicture}[
  node distance=0.55cm and 0.6cm,
  box/.style={draw,rounded corners,minimum height=0.75cm,align=center,fill=gray!4},
  trust/.style={draw,rounded corners,minimum height=0.75cm,align=center,fill=blue!6},
  effect/.style={draw,rounded corners,minimum height=0.75cm,align=center,fill=orange!8},
  arrow/.style={-{Latex[length=2mm]},thick}
]
\node[trust] (grant) {Root grant\\and deployment policy};
\node[trust,below=of grant] (prov) {Provenance/context\\manifests and releases};
\node[box,right=1.0cm of grant] (ingress) {Trusted ingress\\$E_0$};
\node[box,right=of ingress] (memory) {Memory stage\\$E_1,\rho_0$};
\node[box,right=of memory] (gateway) {Policy stage\\$E_2,\rho_1$};
\node[box,right=of gateway] (adapter) {Protocol adapter\\$E_3,\rho_2,w$};
\node[trust,right=of adapter] (verify) {Continuity\\verifier};
\node[effect,right=of verify] (sink) {Finality sink\\$\pi\rightarrow e,\omega$};
\draw[arrow] (ingress)--(memory);
\draw[arrow] (memory)--(gateway);
\draw[arrow] (gateway)--(adapter);
\draw[arrow] (adapter)--(verify);
\draw[arrow] (verify)--node[above,sloped]{one-shot permit}(sink);
\draw[arrow] (grant.east)--(ingress.west);
\draw[arrow] (grant.east) to[bend left=18] (verify.north west);
\draw[arrow] (prov.east) to[bend right=12] (ingress.south west);
\draw[arrow] (prov.east) to[bend right=20] (verify.south west);
\node[below=0.12cm of adapter,align=center,font=\scriptsize] {Every changed security path is preserved\\or carries a trusted relation witness};
\node[below=0.12cm of sink,align=center,font=\scriptsize] {Recheck subject, action, policy,\\revocation, expiry, nonce, idempotency};
\end{tikzpicture}%
}
\caption{\system{} architecture. The planner is outside the trusted path and may propose arbitrary actions. Security comes from authenticated roots, proof-carrying transitions, and a revalidating finality boundary.}
\label{fig:architecture}
\end{figure*}

\subsection{Authenticated chain origin}

The deployment policy names accepted ingress identities and root-grant issuers separately. The first envelope must be signed by a trusted ingress, while the root grant must be signed by a trusted authority. The verifier checks that the root's identity, task, authority, delegation, tool/server, effect class, policy, provenance/context commitments, field constraints, nonce, expiry, and revocation state agree with the grant.

This two-role design prevents two attacks from the first artifact. First, a non-ingress component key cannot create a self-signed $E_0$ and then sign the rest of the chain. Second, $E_0$ cannot simply declare a large authority set and rely on downstream monotonicity. The accepted origin is bounded by a separately authenticated grant.

Field constraints are deterministic predicate specifications. For example, a finance task may permit \code{/action/parameters/amount\_cents} only in $[1,10^6]$, while an external-data release for the same field is more restrictive, such as $[1,2\cdot10^4]$. The root grant bounds the overall task; the release bounds the particular untrusted value.

\subsection{Field-resolved provenance}

The provenance manifest contains source records and path claims. A claim binds
\[
(path,source\_id,H(value)).
\]
The verifier parses the path, resolves the actual leaf in $E_0$, and recomputes the digest. Thus, \code{/action/parameters/amount\_cents} is not an opaque label. A manifest that claims a source for one amount but accompanies another amount fails before transition verification.

The context manifest similarly commits to the source-classified items supplied to the planning context. In this prototype, the verifier authenticates the signed manifest and its root; the surrounding runtime is responsible for ensuring the manifest actually covers the bytes shown to the model. This trust boundary is stated explicitly in \cref{sec:limitations}.

\subsection{Bounded typed releases}

Untrusted content is data, not authority. Nevertheless, useful tasks often need external values to determine an action field: an invoice amount, a deployment commit, a recipient alias, or a delegated capability. A blanket rule preventing all such influence would destroy utility.

A typed release solves this by binding one source value to one target path. The credential includes source and manifest digests, exact value digest, principal, actor, task, operation, tool, predicate, issuer, nonce, and expiry. The verifier checks all fields and evaluates the predicate on the value resolved from $E_0$. Examples in the artifact are:

\begin{itemize}[leftmargin=1.2em,itemsep=0.15em]
  \item finance: integer amount in $[1,20{,}000]$ cents;
  \item workspace: destination alias with prefix \code{alias:workspace:customer:};
  \item DevOps: exact signed 40-hex commit identifier;
  \item delegation: capability in the enum \code{\{report.generate\}}.
\end{itemize}

A release signed for value $v$ cannot be replayed for $v'$, even if both satisfy the predicate, because the credential also binds $H(v)$. An expired or revoked credential is rejected.

\subsection{Role-bound transition receipts}

For each stage, the deployment fixes the expected role, signer, and contract. A receipt commits to the input and output envelope digests, contract digest, recomputed changed paths, assumption and guarantee predicate identifiers, and transformation-witness identifiers. Verification requires:

\begin{enumerate}[leftmargin=1.2em,itemsep=0.15em]
  \item receipt signer equals output producer;
  \item signer, role, contract, stage order, and predecessor match deployment;
  \item contract assumptions and required upstream guarantees hold;
  \item all security changes conform to preservation or a declared relation;
  \item contract postconditions hold before its guarantee tags are made available downstream.
\end{enumerate}

A component cannot select an arbitrary trusted contract. The contract is authorized for a specific stage and signer by deployment policy.

\subsection{Semantic transformations}

Real pipelines transform action representations. Alias resolution, unit conversion, schema translation, and tenant routing may legitimately change a security field. Treating every change as malicious makes the system unusable; treating a path listed in \code{may\_transform} as freely mutable makes the declaration meaningless.

\system{} therefore attaches a relation identifier to each transformable path. For a destination alias, the independently signed witness contains the alias, resolved address, both value digests, trusted directory issuer, component signer, contract, principal, task, and expiry. Independently of the adapter, the verifier evaluates:
\[
\begin{aligned}
M_{alias}(v,v',w)\equiv{}& v=w.alias \land v'=w.resolved \\
                          &{}\land\;v\text{ is logical} \land v'\text{ is canonical}.
\end{aligned}
\]
It also verifies that the witness issuer is trusted for \code{alias\_resolution}. This establishes consistency with a directory-attested mapping; the trusted directory and relation registry remain part of the TCB and are not reconstructed by the verifier. Every normal benchmark instance exercises this path. Missing witness, untrusted issuer, false relation, digest mismatch, or postcondition failure rejects the chain.

\subsection{Assume--guarantee flow}

Contracts expose both structural fields and semantic guarantee tags. The root verifier establishes \code{root-authenticated}, \code{provenance-authenticated}, and \code{context-authenticated}. The memory contract requires those tags and, after its predicates and transition checks pass, establishes \code{memory-context-preserved}. The gateway requires that guarantee and establishes \code{policy-authorized}. The adapter requires the gateway guarantee and establishes \code{canonical-action} only if its destination postcondition and transformation relation hold.

A failed stage does not emit usable downstream guarantees. This prevents a later component from satisfying its assumptions merely because a string label appeared in a receipt.

\subsection{Finality and lifecycle state}

The verifier issues a permit only after the whole proof bundle passes. The permit binds the principal, actor, task, root grant, sink audience, exact action digest, proof-bundle digest, current policy identity/digest/epoch, nonce, idempotency key, expiry, and one-time flag.

The finality sink receives the actual caller identity and current runtime state. Immediately before effect it rechecks the signature, caller subject, audience, action digest, policy, revocation, expiry, nonce, and idempotency state. A retry with a new nonce but the same idempotency key returns the prior outcome rather than creating a second effect. A permit issued before a grant is revoked is rejected at finality.

\subsection{Complete mediation}

The proof protects only sinks that require a valid permit. A deployment must enumerate protected effect classes and route every equivalent interface through a compatible broker. The artifact models a direct alternate path as a negative control. Disabling mediation causes the effect even though every receipt and permit mechanism on the normal path remains correct.

\section{Reference Implementation}
\label{sec:implementation}

The artifact is a Python 3.11+ reference implementation. The security core is 1,523 lines in \code{src/continuity/core.py}; scenario construction, fault injection, reference configurations, aggregation, and timing add 1,697 lines in \code{experiment.py}. The repository includes 30 regression tests and publication scripts. It does not invoke an LLM: attack candidates are instantiated directly because the planner is adversarial in the model.

\subsection{Canonical objects and signatures}

All signed dataclasses are serialized to deterministic JSON with sorted keys, integer-only benchmark numerics, normalized sets, and no non-finite floats. SHA-256 supplies object digests, and Ed25519 supplies signatures. This restricted encoder is sufficient for internal reproducibility but is not claimed to be a complete RFC~8785 implementation~\cite{rfc8785}. Production interoperability requires a standardized canonicalizer, explicit Unicode normalization, key rotation, hardware-backed keys, and certificate or workload-identity validation.

The keyring contains public keys, but trust is role-specific in \code{DeploymentPolicy}. Merely appearing in the keyring does not authorize a key to act as ingress, authority, provenance issuer, release validator, transformation validator, component stage, verifier, or sink.

\subsection{JSON Pointer and leaf changes}

Security fields use RFC~6901-style paths~\cite{rfc6901}. Fixed context and action fields are atomic, while nested action parameters are recursively enumerated at leaf granularity. For example, changing only an amount produces:
\begin{lstlisting}
/action/parameters/amount_cents
\end{lstlisting}
rather than a change to the undifferentiated parameters object. The same resolver is used for provenance claims, root constraints, release targets, transform rules, and change-set recomputation. Missing paths fail closed.

\subsection{Verifier algorithm}

\begin{figure}[t]
\begin{lstlisting}
Verify(bundle, deployment, now):
  check root-envelope and root-grant signatures
  enforce trusted ingress and grant issuer
  bind identity, task, authority, scope, policy,
       provenance/context roots and field bounds

  verify provenance and context manifests
  resolve each claimed leaf and compare value digest
  verify releases and evaluate bounded predicates

  guarantees <- authenticated-root guarantees
  for each deployed stage i:
      check signer/role/contract/order/predecessor
      verify envelope and transition receipt
      recompute changed leaf paths
      check required fields, assumptions, guarantees
      for each changed security path:
          require declared transform relation
          verify trusted witness and relation predicate
      check monotonic authority/scope/taint/policy
      check ensured fields and postconditions
      publish stage guarantees only on success

  check final authority, scope, tool manifest,
       current policy, expiry and destination
  return decision
\end{lstlisting}
\caption{Simplified verification procedure. Every check is deterministic; model-generated explanations are ignored.}
\label{fig:verifier}
\end{figure}

The actual verifier accumulates reason codes rather than stopping at the first error. This improves auditability and supports targeted regression tests. Duplicate identifiers, chain-length mismatch, sequence gaps, false change sets, unused or unknown transform witnesses, and undisclosed assumption/guarantee identifiers are also rejected.

\subsection{Finality implementation}

The permit issuer receives only an allowed decision and the final envelope. The sink independently verifies the permit signature and runtime state. The caller subject is an explicit input to \code{FinalitySink.execute}; it is not inferred from permit contents. The sink stores consumed nonces and outcome receipts in an in-memory ledger. On an idempotent retry it returns the existing signed outcome without applying the action again. Effects are recorded only in an in-memory simulated world.

\subsection{Reference pipeline}

The evaluated pipeline has three role-bound stages:

\begin{description}[leftmargin=1.2em,style=nextline,itemsep=0.15em]
  \item[Memory.] Preserves identity, provenance, root, action, and context fields; establishes \code{memory-context-preserved}.
  \item[Gateway.] Requires that guarantee, checks current policy as a deterministic postcondition, and establishes \code{policy-authorized}.
  \item[Adapter.] Requires gateway authorization and resolves the logical destination to a canonical address under an \code{alias\_resolution} relation witness; establishes \code{canonical-action}.
\end{description}

Authority, delegation scope, taint, and policy fields have independent global monotonicity checks in addition to contract logic. This intentional redundancy explains why disabling one monotonicity check does not always create an attack when another invariant still rejects the same mutation.

\subsection{Static contract linting}

A lightweight linter checks field availability and guarantee-tag flow before runtime. It ensures that a stage's required fields and guarantees are available from the authenticated root or predecessors. Runtime verification remains authoritative because static tags alone cannot validate concrete signatures, values, predicates, or transformation relations.

\subsection{Artifact organization}

The artifact includes raw per-run CSV files, per-fault summaries, ablations, performance measurements, figures, tests, and a claim-to-evidence guide. The public artifact is available at \url{\artifacturl}. The code is distributed under the \codelicense{} License; copyright holders are \codecopyrightholders{}.

\section{Evaluation}
\label{sec:evaluation}

We ask five research questions:

\begin{description}[leftmargin=1.2em,style=nextline,itemsep=0.15em]
  \item[RQ1.] Do individually plausible but incomplete control compositions prevent harmful effects?
  \item[RQ2.] Does the full system contain each modeled discontinuity class across domains?
  \item[RQ3.] Can bounded external data and legitimate semantic transformations retain utility?
  \item[RQ4.] Which invariants are exposed by targeted ablations?
  \item[RQ5.] What latency and proof-size overhead does the reference implementation add?
\end{description}

\subsection{Methodology}

\paragraph{Domains.}
The deterministic generator covers workspace email, finance payments, DevOps deployment, and multi-agent delegation. Each action contains domain-specific operations, resources, destinations, parameters, effect classes, and data classifications.

\paragraph{Faults and scale.}
The suite instantiates the 32 fault classes in \cref{tab:faults} in all four domains, with 20 parameterized instances for each fault--domain pair:
\[
32\times4\times20=2{,}560\text{ attack instances}.
\]
It also contains 400 direct benign tasks, 300 benign tasks using signed typed releases, and 200 ambiguous tasks containing an unreleased external field. The total is 3,460 scenarios. Seven system configurations produce 24,220 system--scenario runs.

The 20 instances vary identifiers, principals, actors, tasks, resources, parameter values, nonces, digests, and signatures, but share a deterministic fault schema. They are not independent samples from a real-world attack distribution. We therefore report exact conformance counts and rates over the generated suite, not binomial confidence intervals or population-level security claims.

\paragraph{Adversarial planner.}
Every attack begins after compromise of planning: the candidate action or security transition is directly instantiated in the attacker's desired form. This avoids crediting the security boundary for an LLM that happens to refuse a payload.

\paragraph{Reference configurations.}
The six incomplete configurations are mechanism-level negative controls, not reimplementations of named systems:

\begin{itemize}[leftmargin=1.2em,itemsep=0.15em]
  \item \emph{PassThrough}: directly executes the final action;
  \item \emph{ToolAllowlist}: checks only the final tool identifier;
  \item \emph{GatewayPolicy}: evaluates authority, scope, policy, and taint at the gateway, then trusts downstream changes;
  \item \emph{ProvenanceGateway}: additionally validates field provenance and releases at the gateway;
  \item \emph{EffectBoundPermit}: binds the post-adapter action to finality without validating its upstream justification;
  \item \emph{Gateway+Finality}: combines gateway policy with an action-bound finality permit, but does not authenticate the gateway-to-adapter transition.\footnote{The permit is bound to the final post-adapter canonical action because benign traces include an authorized alias-to-canonical transformation. Binding the permit to the pre-transform gateway representation would reject valid traces rather than isolate the remaining composition gap.}
\end{itemize}

\paragraph{Metrics.}
\emph{Effect attack success rate} (effect ASR) is the fraction of attack instances that commit a harmful effect. For replay and retry faults, failure requires more than one committed effect; a single authorized effect followed by a rejected or idempotent retry is correct. \emph{Benign auto-completion} is the fraction of benign tasks that commit one effect without escalation. \emph{Ambiguous escalation accuracy} is the fraction of unreleased external-field tasks that commit no effect and return \textsc{Escalate}. \emph{Lifecycle correctness} compares both expected terminal state and expected effect count, rather than treating every attack as requiring the literal disposition \textsc{Deny}.

\subsection{Main results}

\begin{table}[t]
\centering
\scriptsize
\setlength{\tabcolsep}{3pt}
\caption{Main conformance results. ``Classes'' is the number of contained fault--domain pairs out of 128.}
\label{tab:main-results}
\begin{tabular}{@{}lrrrr@{}}
\toprule
Configuration & Effect ASR & Classes & Benign & Ambig. \\
\midrule
Pass-through          & 100.0\% &   0 & 100\% &   0\% \\
Tool allowlist        & 100.0\% &   0 & 100\% &   0\% \\
Gateway policy        &  96.9\% &   4 & 100\% &   0\% \\
Provenance gateway    &  84.4\% &  20 & 100\% & 100\% \\
Effect-bound permit   &  81.3\% &  24 & 100\% &   0\% \\
Gateway + finality    &  65.6\% &  44 & 100\% & 100\% \\
\textbf{CONTINUITY}   & \textbf{0.0\%} & \textbf{128} & \textbf{100\%} & \textbf{100\%} \\
\bottomrule
\end{tabular}
\end{table}

\begin{figure*}[t]
\centering
\includegraphics[width=0.90\textwidth]{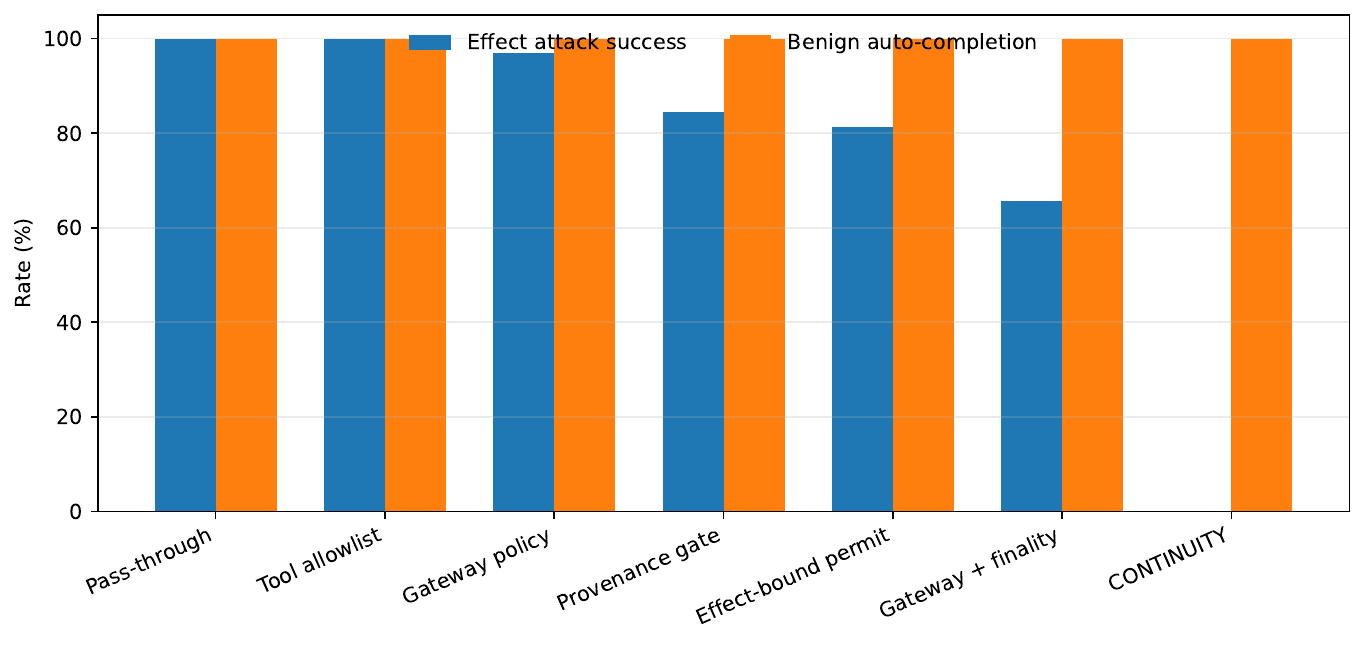}
\caption{Harmful-effect success and benign auto-completion. The full system preserves benign completion while incomplete compositions remain vulnerable to cross-layer faults.}
\label{fig:main-results}
\end{figure*}

\system{} commits no harmful effect in 2,560 attack instances and contains every one of the 128 fault--domain classes. It completes all 700 benign tasks and escalates all 200 ambiguous tasks. Its lifecycle correctness is 100\%, including replay and retry semantics.

The strongest incomplete configuration, Gateway+Finality, still has 65.6\% effect ASR. This result illustrates the central composition claim. Gateway authorization plus exact final action binding is insufficient when the permit issuer accepts a post-adapter action without verifying how it relates to the gateway-approved state. Conversely, EffectBoundPermit protects replay, subject, revocation, and post-permit substitution but lacks root, provenance, and contract justification; its ASR is 81.3\%.

GatewayPolicy contains only four fault--domain classes because it observes the pre-adapter state. ProvenanceGateway contains 20 classes and correctly escalates unreleased external fields, but downstream provenance loss, parameter changes, role misuse, and alternate paths remain outside its view.

\subsection{Utility and validated change}

All normal traces perform a security-relevant adapter transformation. The ingress and gateway carry a logical destination alias; the final action carries a canonical address. The full system accepts the transformation only with a trusted directory witness and verified alias relation. Thus, 100\% benign completion does not result from a preserve-everything contract.

All 300 external-data tasks carry a signed typed release. The verifier checks source identity and digest, leaf path, exact value digest, operation, tool, bounded predicate, issuer, task, expiry, and revocation. All 300 complete. In contrast, all 200 otherwise plausible tasks that omit a release return \textsc{Escalate} and commit no effect. The release attack classes demonstrate negative controls: an out-of-bound value, value substitution, or expired credential is rejected even when the target field name is present.

\subsection{Ablation study}

The ablation suite uses one representative instance for each of the 128 fault--domain pairs. \cref{tab:ablation} reports reopened classes. ``NoContractConformance'' retains envelope and receipt signatures, digest links, recomputed change sets, global monotonicity, and finality; it disables preservation, transform-rule, and contract postcondition checks. It is therefore not mislabeled as removal of receipts.

\begin{table}[t]
\centering
\scriptsize
\caption{Targeted ablations over 128 fault--domain classes. Zero for one check can indicate redundant containment by another invariant, not irrelevance to the sufficient-condition theorem.}
\label{tab:ablation}
\begin{tabular}{@{}lr@{}}
\toprule
Ablation & Reopened classes \\
\midrule
No field provenance                  & 24 \\
No contract conformance              & 24 \\
Incomplete mediation                 & 24 \\
No root authentication               & 16 \\
No release validation                & 12 \\
No transform-witness validation      &  8 \\
No replay protection                 &  8 \\
No component-role binding            &  4 \\
No identity binding                  &  4 \\
No delegation monotonicity           &  4 \\
No taint monotonicity                &  4 \\
No policy freshness                  &  4 \\
No context commitment                &  4 \\
No action binding                    &  4 \\
No subject binding                   &  4 \\
No revocation recheck                &  4 \\
No authority monotonicity alone      &  0 \\
\bottomrule
\end{tabular}
\end{table}

The three largest effects are field-provenance removal, contract-conformance removal, and incomplete mediation, each reopening 24 classes. Root authentication reopens 16, including the exact self-origin and over-broad-origin counterexamples. Release validation reopens 12 classes; transform validation and replay protection each reopen eight. Authority monotonicity alone reopens none in this benchmark because operation and action changes remain rejected by root, preservation, scope, and exact-action controls. This demonstrates defense in depth and also cautions against interpreting an ablation as proof that a theorem condition is logically unnecessary.

\begin{figure}[t]
\centering
\includegraphics[width=\columnwidth]{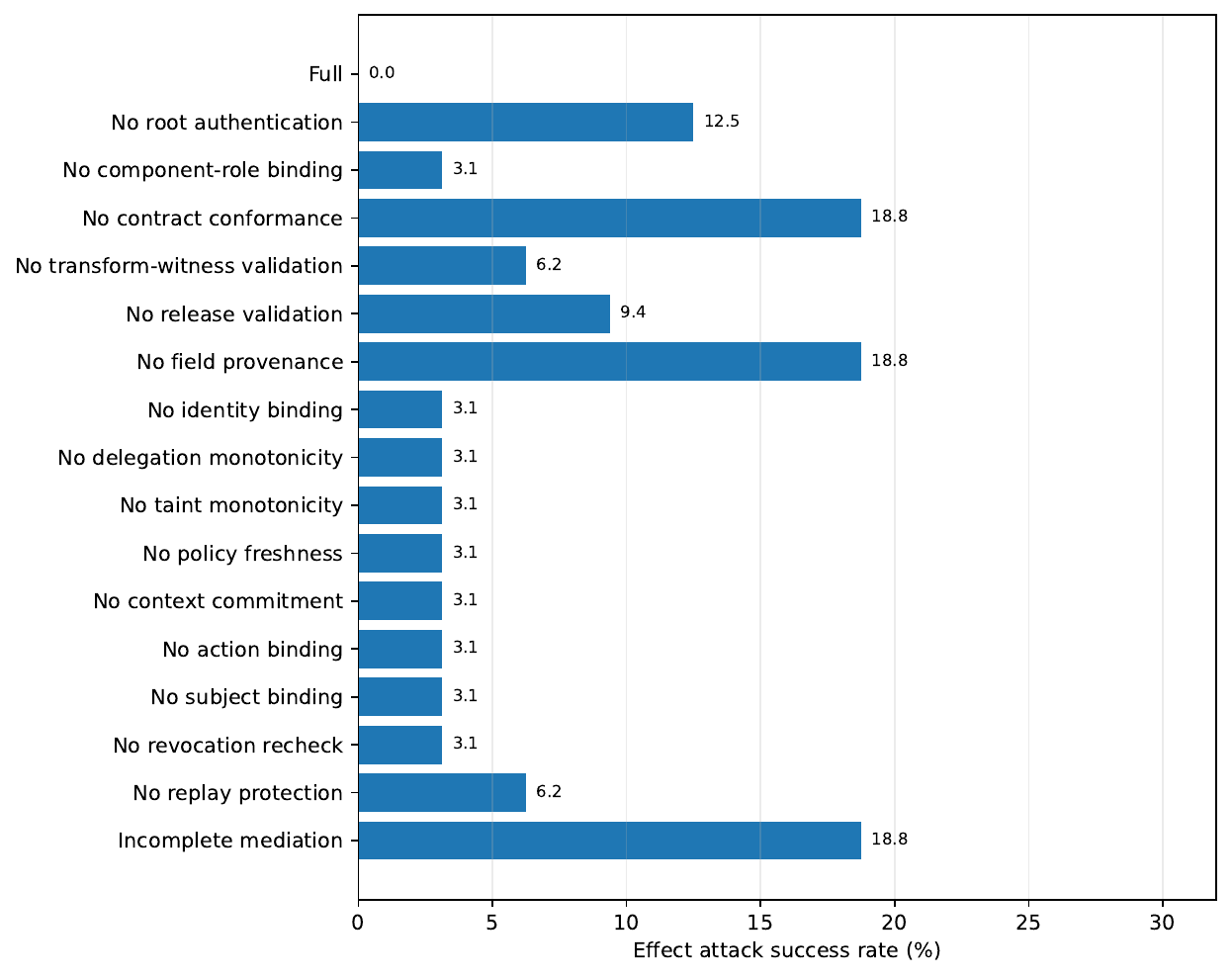}
\caption{Effect ASR under nonzero targeted ablations. Full CSV results retain the zero-valued redundant ablations.}
\label{fig:ablation}
\end{figure}

\subsection{Performance}

Measurements were collected on the recorded Linux host using Python and the \code{cryptography} Ed25519 implementation. Each reported microbenchmark operation is repeated 300 times. Timing is host- and runtime-dependent and excludes network, model, external-policy-service, and durable-storage latency.

\begin{table}[t]
\centering
\small
\caption{Reference-prototype latency.}
\label{tab:performance}
\begin{tabular}{@{}lrr@{}}
\toprule
Operation & p50 & p95 \\
\midrule
Proof verification & 4.21 ms & 4.91 ms \\
End-to-end transition + finality & 7.17 ms & 8.07 ms \\
\bottomrule
\end{tabular}
\end{table}

Verification scales approximately linearly with the number of signed transitions. In the recorded run, bundles with 1, 3, 5, 10, and 20 transitions occupy approximately 8.1, 12.4, 16.8, 27.6, and 49.4~KiB, respectively. The implementation stores full signed envelope snapshots. A production encoding could use deltas, Merkle commitments, checkpoint receipts, batch verification, and compact binary serialization.

\begin{figure}[t]
\centering
\includegraphics[width=\columnwidth]{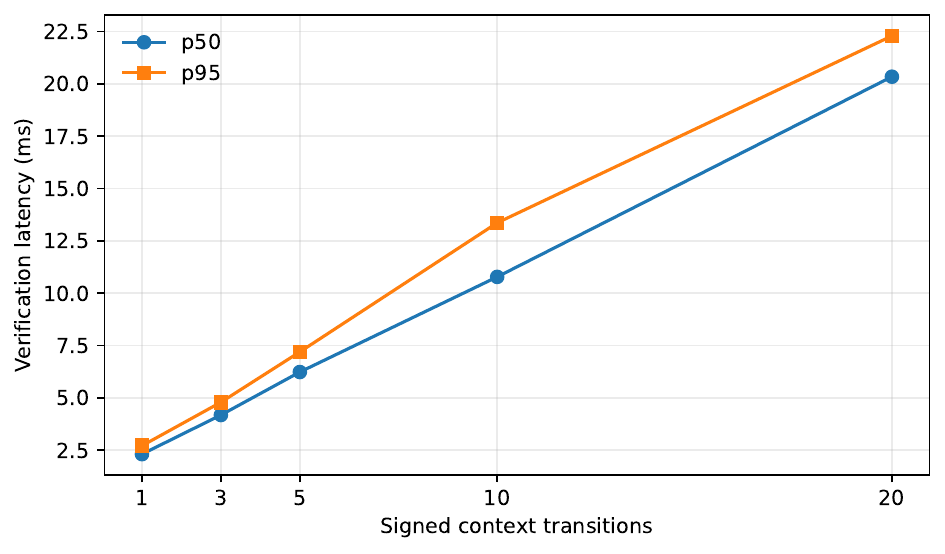}
\caption{Proof-verification latency versus signed transitions.}
\label{fig:scaling}
\end{figure}

\subsection{Interpretation}

The evaluation supports a conformance statement: for every generated instance, the full implementation enforces the modeled obligations, and removing particular obligations exposes corresponding faults. It does not establish a universal attack rate, completeness of the fault taxonomy, correctness of trusted validators, or resistance to a compromised TCB. Those limitations are addressed in \cref{sec:limitations}.

\section{Security Analysis}
\label{sec:security}

\subsection{Theorem-to-implementation correspondence}

\Cref{thm:composition} is conditional. The artifact now enforces each stated condition rather than treating it as an external assumption without a check:

\begin{itemize}[leftmargin=1.2em,itemsep=0.15em]
  \item C1 maps to trusted ingress, signed root grant, role-specific issuer sets, root field constraints, signed provenance/context manifests, and leaf-value digest checks.
  \item C2 maps to pipeline stage bindings, required/ensured guarantee tags, and deterministic input/output predicates.
  \item C3 maps to JSON Pointer leaf changes, preserved roots, declared transform rules, and signed relation witnesses.
  \item C4 maps to root subsets, transition monotonicity, and source/value/path/predicate-bound releases.
  \item C5 maps to a complete structured action digest shared by verification and finality.
  \item C6 maps to finality-time subject, policy, revocation, expiry, nonce, and idempotency checks.
  \item C7 maps to the permit-requiring effect broker; the alternate-path fault demonstrates the consequence of violating it.
\end{itemize}

The implementation does not mechanically prove that its Python code refines the formal model. The regression suite supplies executable evidence for the specified checks and counterexamples; a proof assistant or verified implementation remains future work.

\subsection{Root compromise}

Monotonicity cannot secure an attacker-chosen origin. If a key accepted as a root-grant issuer is compromised, the attacker can issue an over-broad root grant; if both deployment trust and revocation fail to remove that key, the guarantee is lost. Similarly, compromise of a trusted provenance, release, or transformation validator allows false claims within that validator's role.

This is different from compromise of an ordinary pipeline component. A memory, gateway, or adapter key is authorized only at its configured stage and contract. The attacker can sign that stage's output, but cannot start a chain, issue a grant, impersonate a different stage, create a trusted release, or create a trusted transformation witness. It may still exploit any overly permissive relation or predicate configured for its contract. Thus, stage-key compromise is contained only to the extent that independently checked contracts and validators are restrictive.

\subsection{Signatures versus authorization}

The verifier distinguishes key presence from role authorization. Three checks are separate:
\[
\begin{aligned}
&\mathsf{SigValid}(k,o),\qquad \mathsf{RoleAllowed}(k,r),\\
&\mathsf{TransitionOK}(o_{in},o_{out},\contract).
\end{aligned}
\]
An object is accepted only when all required predicates hold. This prevents the attack in which a valid adapter key signs $E_0$ and every receipt, or a receipt is signed by a different valid component than the output producer.

\subsection{Typed-release safety}

A release predicate limits the allowed value set, but predicate satisfaction alone is insufficient. Without an exact value digest, a credential could be replayed for another value that also satisfies the predicate. Without source and manifest binding, the same value could be attributed to a different document. Without principal, task, operation, and tool binding, it could be replayed across workflows. The implemented release checks all of these dimensions.

Trusted-release issuance remains security-critical. \system{} verifies that an issued credential is internally and contextually valid; it does not prove that the validator's extraction from an invoice or other source was semantically correct.

\subsection{Transformation safety}

A contract declaration that a path ``may transform'' is not itself authorization. The relation witness must be signed by an issuer trusted for that relation and bind both before and after values. The verifier independently evaluates the relation predicate. The current artifact implements alias resolution. A production system would require relation-specific validators for currency conversion, schema normalization, tenant routing, or identifier resolution, together with versioned semantics and revocation.

If the relation predicate is non-injective, omits a security field, or admits attacker-chosen outputs, \cref{thm:substitution} applies. The relation registry is therefore part of the TCB.

\subsection{Freshness and time-of-check/time-of-use}

Admission-time correctness is not finality-time correctness. A policy can change, a grant can be revoked, or the caller can be substituted after a permit is issued. The sink rechecks the current state rather than relying solely on the verifier's earlier decision. It also atomically consumes nonce/idempotency state before effect in the model.

A real distributed provider may have its own concurrency, partial-failure, and idempotency semantics. The prototype's in-memory atomicity is a simplifying assumption. Production adapters need provider-specific transactional or compensating behavior.

\subsection{Complete mediation}

The system cannot protect a path it does not mediate. Tool aliases, direct SDK calls, browser automation, shell commands, remote agents, and recovery paths must be mapped to effect classes and routed through compatible sinks. Deployment review should therefore test effect reachability, not only registered tool names. The artifact's \code{ALTERNATE\_PATH} fault commits an effect immediately when mediation is disabled, even though all normal-path cryptographic checks remain intact.

\subsection{Residual attacks}

An adversarial planner can still cause denial of service by repeatedly proposing rejected actions, consume policy and signature-verification resources, or produce unsafe textual output not connected to a protected sink. A malicious but trusted source may supply incorrect facts that satisfy all structural checks. Covert exfiltration through allowed text, timing, resource names, or aggregate queries is not eliminated. \system{} is a consequence-integrity boundary, not a complete information-flow or semantic-safety solution.

\section{Discussion}
\label{sec:discussion}

\subsection{What should be standardized?}

The artifact suggests that interoperability requires more than a universal \code{ALLOW}/\code{DENY} hook. A cross-control contract needs stable semantics for at least:

\begin{itemize}[leftmargin=1.2em,itemsep=0.15em]
  \item principal, actor, task, root grant, and delegation scope;
  \item field-resolved provenance and model-context commitments;
  \item canonical action identities and protected effect classes;
  \item typed releases and transformation-relation witnesses;
  \item policy identity/digest/epoch, audience, nonce, idempotency, and outcome.
\end{itemize}

MCP and runtime-hook standards can carry such objects, but the paper does not propose replacing their transport or authorization layers. The likely integration point is an extension or sidecar envelope associated with tool calls and agent handoffs. An implementation must specify canonicalization and field paths identically at the producer and verifier.

\subsection{Fail closed versus escalation}

Missing proof does not always mean the user requested a forbidden task. An unreleased invoice value may be legitimate but insufficiently validated. The system therefore distinguishes \textsc{Escalate} from \textsc{Deny}. Escalation can request a signed source validator, narrower user confirmation, reauthentication, or human approval bound to the exact action digest. Invalid signatures, role violations, root excess, stale state, and false relation witnesses remain hard failures.

\subsection{Trust distribution}

The prototype stores deployment trust locally. In a single enterprise, ordinary PKI, workload identity, or a transparency log may be sufficient. Cross-organization agent interactions may require shared issuer discovery, revocation, and evidence anchoring. A consortium or public ledger could anchor policy, manifest, revocation, or batch-evidence roots, but it is optional. The safety property comes from authenticated contracts and finality enforcement, not from placing prompts or private task data on a blockchain.

\subsection{Policy and relation evolution}

Contracts, predicates, and relation semantics are versioned security code. Updating \code{alias\_resolution} or a release predicate changes the accepted effect set and should trigger review, compatibility testing, and policy-epoch changes. A permit issued under an older policy is rejected once the sink's runtime state advances. Long-lived workflows need explicit migration or reauthorization rather than silent compatibility.

\subsection{Observability and incident response}

A complete witness supports questions that ordinary traces cannot reliably answer: Which root grant authorized the task? Which source supplied a protected field? Which release and predicate admitted it? Which component changed the destination? Which relation witness justified the change? Which policy epoch and sink committed the effect? These objects can be retained off-chain with privacy controls, while only hashes or batch roots are externally anchored.

\subsection{Relationship to information flow}

\system{} tracks selected provenance and taint but is not a full language-level information-flow control system. It focuses on structured effect fields and explicit transitions. CaMeL and Prompt Flow Integrity address broader control/data-flow separation~\cite{debenedetti2025camel,kim2025pfi}; ARGUS tracks causal influence~\cite{weng2026argus}. These mechanisms can serve as upstream producers of authenticated manifests, while \system{} tests whether their output survives to finality. The composition contract does not substitute for their internal analyses.

\section{Related Work}
\label{sec:related}

\paragraph{Prompt-injection evaluation.}
Early work demonstrated indirect prompt injection against application-integrated LLMs~\cite{greshake2023indirect}. InjecAgent and AgentDojo provide systematic environments for tool-integrated and dynamic agent attacks~\cite{zhan2024injecagent,debenedetti2024agentdojo}. Those benchmarks primarily study whether an attacker can influence agent behavior and task outcomes. Our conformance suite begins from an adversarial proposal and instead targets discontinuities among downstream controls. The two evaluation styles are complementary: a natural-language benchmark estimates proposal pressure, while \system{} tests whether a compromised proposal can cross a structural effect boundary.

\paragraph{Control/data isolation and privilege.}
CaMeL separates trusted control from untrusted data and constrains capabilities through program analysis~\cite{debenedetti2025camel}. Prompt Flow Integrity combines execution isolation, secure data handling, and privilege control~\cite{kim2025pfi}. IsolateGPT isolates executable agent components~\cite{wu2024isolategpt}, and Progent provides programmable privilege control for tools~\cite{shi2025progent}. These works motivate least authority and deterministic enforcement. Our focus is the interface theorem: what authenticated fields and guarantees must be preserved when such a control hands the action to a different policy, adapter, or sink.

\paragraph{Policy enforcement.}
Formal policy work translates policy into enforceable runtime constraints and evaluates practical agent systems~\cite{palumbo2026formal}. The OWASP Agent Control Standard defines runtime interception and policy-decision hooks~\cite{owasp2026acs}. \system{} does not compete with a policy language or hook API. It supplies a proof-carrying transition model in which a policy decision is one stage guarantee, and later components must preserve the facts on which that decision depended.

\paragraph{Provenance and execution integrity.}
ARGUS reconstructs causal runtime provenance and audits whether proposed actions are supported by trusted task information~\cite{weng2026argus}. CXI binds field authority, typed releases, exact effects, and invocation authority at a protected sink~\cite{santosgrueiro2026cxi}. Our design adopts the importance of field-level provenance, validated releases, and exact effect binding. The distinct contribution is composition across heterogeneous controls: trusted root enforcement, role-bound component contracts, signed before/after transition receipts, independently checked semantic relations, and a fault-injection methodology for context loss between provenance, policy, adapter, and finality layers.

\paragraph{Protocols and authorization.}
MCP specifies tool discovery and invocation and cautions that untrusted tool metadata is not itself authoritative~\cite{mcp2026tools,mcp2026security}. OAuth security guidance similarly emphasizes audience, least privilege, sender constraints, freshness, and replay resistance~\cite{rfc9700}. \system{} binds these concepts to a concrete action and task witness but leaves transport and user authentication to existing protocols.

\paragraph{Compositional and supply-chain assurance.}
Assume--guarantee reasoning has a long history in compositional specifications~\cite{abadi1993composing}. Software supply-chain frameworks such as in-toto and TUF authenticate authorized transformations and compromise-resilient metadata rather than trusting only the final artifact~\cite{torresarias2019intoto,samuel2010tuf}. \system{} applies a related insight to agent execution: each security-sensitive transformation must be attributable to an authorized role and satisfy an explicit relation before the final action is trusted.

\paragraph{Positioning.}
The paper does not claim novelty for digital signatures, provenance, capabilities, policy enforcement, typed releases, finality, JSON canonicalization, or assume--guarantee reasoning individually. It contributes a concrete end-to-end composition property, an implementation that enforces previously omitted root/release/transform assumptions, and an executable taxonomy of cross-control discontinuities.

\section{Limitations}
\label{sec:limitations}

\paragraph{Trusted roots are deployment inputs.}
The prototype now enforces trusted ingress, root-grant issuers, root bounds, and role-specific validators. It does not provision those roots or decide which organization should be trusted. Compromise or malicious configuration of a root authority, provenance issuer, release validator, transformation validator, verifier, or mandatory finality sink can invalidate the corresponding guarantee. Key rotation, quorum authorization, hardware roots, and certificate-path validation are outside the prototype.

\paragraph{Integrity is not semantic correctness.}
CONTINUITY guarantees continuity of authenticated security facts and authorization decisions, not the truth of those facts. A trusted provenance, release, or transformation validator may itself produce an incorrect structured conclusion. In that case, CONTINUITY can faithfully preserve and enforce an incorrect conclusion. Semantic correctness of trusted validators therefore remains part of the trusted computing base and is orthogonal to end-to-end consequence integrity.

\paragraph{Context-manifest completeness.}
The verifier recomputes and authenticates the context-manifest digest. The surrounding agent runtime must ensure that this manifest covers all bytes and tool outputs shown to the planner. The prototype does not instrument a production model runtime or prove completeness of its context capture. A runtime that silently exposes additional attacker-controlled content violates the assumption.

\paragraph{Restricted transformation language.}
The implementation evaluates one security-relevant relation, alias resolution, and a small deterministic predicate registry. Real deployments need richer, versioned semantics for currency conversion, schema mapping, identifier resolution, aggregation, and declassification. General-purpose relation languages create their own soundness, decidability, and policy-review challenges.

\paragraph{No verified implementation.}
The proofs apply to the abstract model and conditions. The Python artifact is extensively regression-tested but not mechanically verified, constant-time, hardened, or formally shown to refine the model. Its deterministic JSON encoder is intentionally restricted and should be replaced by a standards-conformant canonicalizer in interoperable deployments.

\paragraph{Synthetic conformance benchmark.}
The suite contains 32 designed fault templates, four domains, and deterministic parameter variations. It is useful for falsifying missing invariants and validating the artifact, but it is neither exhaustive nor sampled from a real attack population. The 0/2,560 result must not be interpreted as an estimated universal failure probability. It also does not compare model quality because no LLM is invoked.

\paragraph{Simplified provider semantics.}
The finality world, nonce ledger, and idempotency store are in-memory and atomic. Real providers exhibit concurrency, retries, eventual consistency, non-idempotent side effects, and partial failures. Provider-specific transaction protocols and durable evidence are required for production.

\paragraph{Incomplete information-flow coverage.}
The system checks declared protected fields and effects. It does not eliminate covert channels, infer every semantically equivalent effect path, or prevent leakage through allowed textual output, timing, aggregate queries, resource names, or malicious downstream provider behavior. Complete mediation is an engineering and governance obligation.

\paragraph{Human and semantic error.}
A valid user grant may authorize an ill-advised action. A human can approve a misleading request, and a trusted tool can return false data. \system{} preserves authorization context; it does not guarantee that the authorized goal is wise, legal, or semantically correct.

\paragraph{Artifact integration.}
The reference pipeline models memory, policy, adapter, and finality boundaries but is not a production MCP, A2A, OWASP ACS, cloud-IAM, or blockchain integration. Those bindings require separate protocol profiles and interoperability testing.

\section{Conclusion}
\label{sec:conclusion}

Agent security controls do not compose automatically. A provenance label, policy decision, capability, or action-bound permit can be locally correct yet fail when a neighboring component drops its assumptions, changes the action under a different schema, accepts an unauthorized signer, or commits an effect under stale state. We formalized this problem as security-context discontinuity and defined end-to-end consequence integrity.

\system{} demonstrates a concrete response: authenticate the root grant; bind each stage to an authorized identity and contract; resolve provenance at leaf paths; make releases source-, value-, predicate-, task-, and expiry-bound; require independently checked relation witnesses for legitimate transformations; and revalidate subject, action, policy, revocation, and replay state at finality. In the deterministic artifact, these checks contain all 128 modeled fault--domain classes while preserving benign execution and explicit escalation.

The broader lesson is architectural: planning may remain probabilistic and adversarial, but external consequence should be mediated by a small, deterministic, proof-carrying control plane whose guarantees remain continuous from instruction context to realized effect.

\appendix
\section{Proofs}
\label{app:proofs}

We state the proof obligations at the abstraction level of \cref{sec:formal}. Cryptographic assumptions are signature unforgeability and digest collision resistance; implementation refinement is not claimed.

\begin{lemma}[Root boundedness]
\label{lem:root}
If $\mathsf{RootOK}(E_0,\Gamma)$ holds, then the principal, actor, task, policy, provenance/context commitments, authority, delegation, tool/server/effect identities, data class, and constrained fields in $E_0$ are authorized by an accepted root issuer.
\end{lemma}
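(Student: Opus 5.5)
The argument unfolds the conjuncts of $\mathsf{RootOK}(E_0,\Gamma)$ from the definition of authenticated origin one at a time. Each listed attribute of $E_0$ gets traced either to equality with a field of $\Gamma$ or to containment in a field of $\Gamma$. We then use signature unforgeability to argue that $\Gamma$ was in fact emitted by an accepted root issuer. Throughout, ``authorized by an accepted root issuer'' means: there is a grant $\Gamma$, signed under a key whose configured role is grant issuer, that fixes or bounds the attribute in question.

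\textbf{Step 1: authenticity and role of $\Gamma$.} $\mathsf{RootOK}$ requires that the signature $\sigma_\Gamma$ verify and that its producer hold the configured grant-issuer role. By unforgeability, except with negligible probability the signed content was produced by the holder of that key. Possession of the key is not by itself authority, since the deployment policy trusts keys only for configured roles. So the relevant point is the role check: it means that a key merely present in the keyring, such as a stage key, cannot count as issuer. Symmetrically, the ingress-role check on $E_0$ shows that $E_0$ was emitted by trusted ingress rather than self-declared by a component.

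\textbf{Step 2: equality bindings.} For the principal, actor, task, grant identifier, policy identity/digest/epoch $\Pi$, provenance and context commitments $P,C$, and nonce, $\mathsf{RootOK}$ requires that the values in $E_0$ agree with those in $\Gamma$. Because both objects are signed over canonical encodings, collision resistance of the digest ensures that the compared values are the ones the signers committed to. So each such field of $E_0$ is literally a value the issuer signed.

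\textbf{Step 3: containment bindings.} For authority and delegation, $\mathsf{RootOK}$ gives $U_{E_0}\subseteq U$ and $S_{E_0}\subseteq S$. The tool, server and effect identities of $E_0$ lie in $T$. The data class is at most $V$. Each constrained field path $f$ satisfies the predicate $B(f)$ when evaluated on $E_0[f]$. Every one of these attributes is therefore within a bound the issuer signed. Freshness (non-expiry, non-revocation) ensures that the authorization is current at verification time.

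\textbf{Main obstacle.} The delicate step is the constrained fields. Here we have to argue that path resolution $E_0[f]$ is deterministic and fail-closed, so that the value checked against $B$ is exactly the leaf that downstream stages and the verifier see. I would justify this by noting that a single canonical resolver is used everywhere and that missing paths are rejected. I would also state explicitly that the lemma bounds $E_0$ only, not later envelopes, and makes no claim if the issuer key itself is compromised, consistent with the stated TCB.
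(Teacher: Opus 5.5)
Your proposal is correct and takes essentially the same route as the paper's proof: you unfold $\mathsf{RootOK}(E_0,\Gamma)$ conjunct by conjunct. Signatures plus role checks pin $E_0$ and $\Gamma$ to trusted ingress and grant issuers; equality checks bind identity, task, policy, commitments and nonce; subset and membership checks bound authority, delegation, tool/server/effect identities and data class; field predicates are evaluated on the resolved leaves of $E_0$; and expiry and revocation give currency, all under signature unforgeability and collision resistance.
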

\begin{proof}
The signatures bind $E_0$ and $\Gamma$ to their producers. Deployment role checks restrict those producers to trusted ingress and grant issuers. Equality checks bind the identity, task, policy, commitments, and nonce. Subset and membership checks bound authority, delegation, tools, servers, effects, and classification. Every field constraint is evaluated on the resolved value in $E_0$. Expiry and revocation checks establish current validity. Hence no accepted root fact exceeds $\Gamma$ under the stated cryptographic assumptions.
\end{proof}

\begin{lemma}[Transition preservation]
\label{lem:step}
If $\mathsf{StepOK}(E_i,\rho_i,E_{i+1},\contract_i,B_i)$ holds, then every security-critical difference between $E_i$ and $E_{i+1}$ is either an authorized monotone reduction or satisfies a declared transformation relation with a valid witness; all $G_i$ guarantees are true of $E_{i+1}$.
\end{lemma}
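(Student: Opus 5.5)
The plan is to unfold the definition of $\mathsf{StepOK}$ clause by clause and show that each clause discharges one part of the conclusion. There are two claims, and I would treat them separately: \emph{(a)} the classification of every security-critical difference, and \emph{(b)} the truth of $G_i$ on $E_{i+1}$. Before either, I would fix the object being reasoned about. Clause~1 says the envelope and receipt signatures verify, the receipt binds $H(E_i)$ and $H(E_{i+1})$, and the recomputed change set equals $\Delta_i$. Under signature unforgeability and collision resistance, the receipt therefore speaks about exactly the pair $(E_i,E_{i+1})$ the verifier holds. Because the verifier recomputes $\Delta(E_i,E_{i+1})$ rather than trusting the component's claim, every differing leaf security path is in $\Delta_i$, and nothing outside it differs. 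Clause~2 adds that the signer is the stage, role, and contract fixed by $B_i$. So the contract $\contract_i$ being applied is the deployment-authorized one, not one chosen by the component.

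For \emph{(a)}, I would split each $f\in\Delta(E_i,E_{i+1})$ into two cases. If $f$ lies under an authority, delegation-scope, taint, or policy-epoch root, clause~5 applies. It requires authority and scope not to widen, taint not to decrease, and policy not to change except through a separately authorized transition; in the prototype, policy changes are not a permitted component transformation at all. The difference is therefore a monotone reduction, which is the first alternative in the lemma. Otherwise, clause~4 applies. Since the step was not rejected, $f$ is outside every preserved root in $P_i$, and $M_i(f)$ holds with a witness $w\in W_i$. That witness was verified as signed by an issuer trusted for the relation, and it binds both value digests, the signer, the contract, the task, and the expiry. Collision resistance then ties those digests to the actual values $E_i[f]$ and $E_{i+1}[f]$, which gives the second alternative. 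The two cases cover every security path.

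For \emph{(b)}, clause~6 says all ensured fields $F_i^{ens}$ and output predicates $Q_i^{ens}$ hold on $E_{i+1}$, and that $G_i$ is published only afterwards. I would read each guarantee tag in $G_i$ as denoting the conjunction of the ensured predicates the contract associates with it. Under that reading, truth of $G_i$ on $E_{i+1}$ follows directly.

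I expect this last step to be the main obstacle, because the definition never states how tags relate to predicates. I would therefore make the implicit assumption explicit: contracts in the TCB are well-formed, meaning each $g\in G_i$ is implied by $Q_i^{ens}$ together with the checked assumptions $Q_i^{req}$ and $A_i$. I would justify this by the trusted-contract-registry assumption of the TCB. A secondary subtlety is the phrase ``authorized monotone reduction.'' I would note that the reduction needs no separate authorization beyond the step itself, since shrinking authority can never amplify it.
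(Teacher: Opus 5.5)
Your proposal is correct and takes essentially the same route as the paper's proof, which also unfolds $\mathsf{StepOK}$ clause by clause: signatures and digests fix the endpoints, signer, role and contract; the recomputed $\Delta(E_i,E_{i+1})$ rules out omitted changes; contract verification requires a declared relation with a valid witness on changed paths; independent monotonicity checks rule out amplification; and $G_i$ is published only after the ensured predicates pass. Two small remarks: clause~4 applies to \emph{every} changed security path, including authority, scope, taint and policy paths, so your case split is unnecessary, and clause~4 is in fact what covers a policy-epoch decrease made through clause~5's ``separately authorized'' exception, which is not a monotone reduction; also, your explicit well-formedness assumption tying each tag in $G_i$ to the ensured predicates states precisely a step the paper's proof leaves implicit.
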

\begin{proof}
The receipt and envelope signatures plus input/output digests bind the exact endpoints, authorized signer, role, contract, stage, and sequence. Recomputed $\Delta(E_i,E_{i+1})$ prevents an omitted changed path. For each changed path, contract verification rejects a preserved path or a path lacking a declared relation. A witnessed path is accepted only if the trusted issuer, relation, path, before/after digests, component, contract, principal, task, and expiry agree and the deterministic relation predicate is true. Independent monotonicity checks reject increased authority or delegation, decreased taint, and policy downgrade. Required and ensured predicates are evaluated concretely; $G_i$ is published only if no stage error occurs.
\end{proof}

\begin{lemma}[Inductive context continuity]
\label{lem:induction}
Suppose $\mathsf{RootOK}(E_0,\Gamma)$ and $\mathsf{StepOK}$ holds for all $i\in[0,n-1]$. Then every security field in $E_n$ is connected to an authorized root value by a sequence of preserved equalities, monotone restrictions, or valid transformation relations, and every downstream assumption is discharged by an authenticated root guarantee or a valid predecessor guarantee.
\end{lemma}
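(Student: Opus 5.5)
The plan is induction on the chain length $k\in[0,n]$, with a two-part invariant $\mathcal{I}(k)$. Part (a): every security field $f$ of $E_k$ is linked to an authorized root value $E_0[f]$ by a finite sequence of three kinds of edges, namely equalities, monotone restrictions, and relation edges $M_j(f)(v_j,v_{j+1},w_j)=\mathsf{true}$ with valid witnesses $w_j$. Part (b): the set $\mathcal{G}_k$ of guarantee tags available after stage $k-1$ contains only two kinds of tags. These are the root tags (\code{root-authenticated}, \code{provenance-authenticated}, \code{context-authenticated}) and the tags $G_j$, $j<k$, emitted by stages whose $\mathsf{StepOK}$ held. Moreover, every tag in $\mathcal{G}_k$ is actually true of the envelope it was published for.

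\emph{Base case $k=0$.} By \cref{lem:root}, $\mathsf{RootOK}(E_0,\Gamma)$ makes every listed field of $E_0$ authorized by an accepted root issuer, so each field has the empty connecting sequence. The root tags are exactly those established by the signature and role checks in $\mathsf{RootOK}$ together with the manifest verification. Hence $\mathcal{G}_0$ satisfies (b).

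\emph{Inductive step $k\to k+1$.} Apply \cref{lem:step} to $\mathsf{StepOK}(E_k,\rho_k,E_{k+1},\contract_k,B_k)$, and split the security paths of $E_{k+1}$ into two cases.
\begin{enumerate}
  \item If $f\notin\Delta(E_k,E_{k+1})$, extend $f$'s sequence by an equality edge. This is legitimate because $\Delta$ is recomputed by the verifier rather than taken from the receipt, so no changed path can be hidden.
  \item If $f\in\Delta(E_k,E_{k+1})$, \cref{lem:step} gives either a monotone reduction, which covers authority, delegation, taint and policy by condition 5 of $\mathsf{StepOK}$, or a relation edge with a valid witness. Append that edge.
\end{enumerate}
Concatenating with the sequence for $E_k$ supplied by the hypothesis gives (a) for $E_{k+1}$.

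For (b), condition 3 of $\mathsf{StepOK}$ requires that the tags in $A_k$ be present in $\mathcal{G}_k$. By the hypothesis these tags come only from the root or from valid predecessors, so stage $k$'s assumptions are discharged by authenticated guarantees. \cref{lem:step} then makes $G_k$ true of $E_{k+1}$, and $G_k$ is published only on success, so $\mathcal{G}_{k+1}=\mathcal{G}_k\cup G_k$ satisfies (b).

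One subtlety needs care: a tag true of $E_j$ may no longer hold for a later envelope. I would state (b) exactly as above, with tags indexed to the envelope they certify. This is all "discharged by a valid predecessor guarantee" requires, so I would not try to prove that guarantees persist.

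I expect the main obstacle to be making "connected to an authorized root value" precise for fields that may legitimately change. The issue is that monotone restrictions and relation edges do not preserve the root value itself; they preserve authorization, which derives from it. I would fix this by defining, for each field $f$, the set $\mathrm{Auth}_k(f)$ of values reachable from the $\Gamma$-bounded value $E_0[f]$ along these edges. I would then argue that a sequence of equalities, reductions and relations keeps $E_k[f]\in\mathrm{Auth}_k(f)$, for two reasons. Reductions only shrink authority sets below the $\Gamma$-subsets given by \cref{lem:root}. Relation edges are attested by witnesses whose issuers are trusted by deployment, and those witnesses bind principal, task and contract. Cross-chain reuse of signatures and witnesses is ruled out by signature unforgeability and digest collision resistance, because they bind the endpoints $H(E_k)$ and $H(E_{k+1})$ and the before/after value digests. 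The induction then closes at $k=n$.
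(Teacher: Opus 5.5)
Your proposal is correct and follows essentially the same route as the paper: induction along the chain, with the base case from \cref{lem:root} and the step from \cref{lem:step}. Unchanged paths get equality edges and changed paths get monotone or witnessed relation edges, and guarantee availability follows because tags are published only on successful steps; your two-part invariant, with each tag indexed to the envelope it certifies, just spells out what the paper's shorter proof leaves implicit.
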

\begin{proof}
By induction on $i$. The base case follows from \cref{lem:root}. For the inductive step, \cref{lem:step} preserves the property for unchanged fields, adds only justified relation edges for transformed fields, and forbids unauthorized amplification. Guarantee publication is conditional on successful verification, so the available guarantee set contains only root guarantees or guarantees from valid prior steps. Therefore the property holds at $E_{i+1}$.
\end{proof}

\begin{lemma}[Finality binding]
\label{lem:finality}
If a finality sink commits $e$ under permit $\pi$, then the actual caller, action, audience, policy state, grant state, expiry, nonce, and idempotency state match $\pi$ immediately before the effect, and $e=\realize(\canon(A_n))$.
\end{lemma}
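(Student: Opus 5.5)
The plan is a direct unpacking of the finality sink's acceptance condition; no induction is needed. I would fix an arbitrary committed effect $e$ under permit $\pi$ and proceed in four steps.

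\textbf{Step 1 (authenticity of $\pi$).} The sink commits only after the permit signature verifies under a key that the deployment policy trusts for the permit-issuer role. By signature unforgeability, the permit contents were emitted by the verifier. These contents are $p,a,t,\Gamma$, $sink$, $H(A_n)$, $H(bundle)$, $\Pi$, $n$, $k$, $x$ and the one-time flag.

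\textbf{Step 2 (runtime state matches $\pi$).} I would walk through each check that the sink performs before committing (C6).
\begin{enumerate}[leftmargin=1.4em,itemsep=0.15em]
  \item The caller subject is an explicit input and is compared against $(p,a)$.
  \item The audience is compared against the sink's own identity.
  \item The current policy identity, digest and epoch are compared against $\Pi$.
  \item Current revocation state is queried for $\Gamma$.
  \item Expiry $x$ is compared against the current clock.
  \item Nonce $n$ and idempotency key $k$ are checked against the ledger.
\end{enumerate}
Each comparison is an equality or membership test evaluated on state read at commit time. Passing all of them therefore yields exactly the claimed matches.

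\textbf{Step 3 (atomicity; the main obstacle).} The difficult part is the phrase ``immediately before the effect.'' It requires that no state change can fall between the check and the commit. I would discharge this by invoking the stated model assumption: the sink atomically consumes the nonce and idempotency state together with the commit, as in the in-memory ledger. Under that assumption, check-and-commit is a single transition. Consumption also gives single use directly. A second presentation of $n$ is rejected. A retry under the same $k$ returns the stored outcome $\omega$ instead of producing a second effect. So $e$ is the unique effect realized under $\pi$.

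\textbf{Step 4 (effect equality).} The sink recomputes $H(\canon(A))$ for the action $A$ it is about to execute and requires equality with the permit's $H(A_n)$. Collision resistance of $H$ then gives $\canon(A)=\canon(A_n)$ except with negligible probability. The sink realizes exactly the action it checked, because C5 binds the same canonical action at permit and sink. Hence $e=\realize(\canon(A))=\realize(\canon(A_n))$.

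Finally, I would state explicitly that the lemma concerns mediated sinks only. Effects realized on bypass paths are handled by C7 and \cref{prop:mediation}, not by this lemma.
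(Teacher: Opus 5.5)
Your proposal is correct and follows essentially the same route as the paper's proof: authenticate $\pi$ via the configured permit signer, unpack the sink's checks on subject, audience, action digest, policy, revocation, expiry, and nonce/idempotency, and use collision resistance to conclude that the executed canonical action is $\canon(A_n)$. Your Step~3 only makes explicit the atomic check-and-consume assumption that the paper states in passing, and your appeal to C5 in Step~4 names a premise the paper leaves implicit: the sink's digest covers the complete structured action, which holds by construction of the sink and need not be an extra hypothesis of the lemma.
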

\begin{proof}
The sink accepts only the configured permit signer and verifies its signature. It compares the explicit caller subject, sink audience, and actual action digest to the permit. It compares policy identity, digest, and epoch to current runtime state and rejects revoked grants or permits. It checks expiry and consumes one-time or idempotency state before applying the action. Collision resistance prevents a different canonical action from sharing the checked digest. The outcome receipt binds the committed effect and resulting state.
\end{proof}

\begin{proof}[Proof of \cref{thm:composition}]
By C1 and \cref{lem:root}, $E_0$ has an authenticated, bounded origin and authenticated provenance/context commitments. The release clause of C4 validates every protected field whose bound source is untrusted. By C2--C4 and \cref{lem:induction}, all fields in the final envelope are causally connected to that origin through valid contract steps without unauthorized authority or data amplification. By C5, the permit binds the same canonical action justified by the bundle. By C6 and \cref{lem:finality}, the sink commits only under current, unconsumed state and the actual action, subject, and audience match. C7 excludes an alternate protected path without such a sink. Concatenating the grant, manifests, releases, envelopes, receipts, transformation witnesses, permit, and outcome yields $W_e$ with $\verify(W_e)=1$ and $\realize(\canon(W_e))=e$. Therefore \eci{} holds.
\end{proof}

\begin{proof}[Proof of \cref{thm:truncation}]
Because $D$ depends on $f$, there exist $E,E'$ identical on all fields except $f$ such that $D(E)\ne D(E')$. Since $q$ omits $f$, $q(E)=q(E')$. Any deterministic downstream function of $q(E)$ alone returns the same decision for both and is therefore wrong for at least one. It must fail closed for both or obtain an authenticated recovery of $f$ to distinguish them.
\end{proof}

\begin{proof}[Proof of \cref{thm:substitution}]
Let $f$ influence the effect but be absent from all fields jointly committed by admission and the permit. Choose authorized action $A$ and $A'$ differing only at $f$ such that their effects differ. Because the checked commitments omit $f$, the same permit validates both checked projections. An adversary controlling the intervening representation can present $A'$ at the sink. If the sink does not rebind the complete action, it commits the unauthorized effect.
\end{proof}

\begin{proof}[Proof of \cref{thm:laundering}]
Induct on the number of ordinary derivation steps. The base authority is bounded by the authenticated root. By construction of $\mathsf{StepOK}$, each accepted non-grant step enforces that output authority is a subset of input authority, so it cannot add an element. At a verified grant edge, only the authority explicitly contained in that valid grant may be added. Hence any authority present after finitely many steps belongs to the root or a verified grant, not to natural-language derivation alone.
\end{proof}

\begin{proof}[Proof of \cref{prop:mediation}]
Let $p$ be a reachable path in a protected effect-equivalence class that bypasses a compatible finality sink. The adversary selects $p$ and invokes the effect without producing a witness accepted by that sink. An effect is realized without the required $W_e$, contradicting \eci{}. Correctness of other paths is irrelevant to reachability of $p$.
\end{proof}

\section{Representative Object Schemas}
\label{app:schema}

The listings omit signatures and some identifiers for space. The implementation signs the full canonical unsigned dataclass.

\subsection{Root grant}
\begin{lstlisting}
{
  "grant_id": "grant:finance:17",
  "issuer": "key:authority",
  "principal": "did:example:user:17",
  "actor": "spiffe://example/agent/finance/4",
  "task_root": "task:finance:17",
  "authority": ["payment.transfer"],
  "delegation_scope": ["payment.transfer:account:17"],
  "allowed_tool_ids": ["urn:mcp:finance:payments"],
  "allowed_server_ids": ["spiffe://finance/payments"],
  "field_constraints": [{
    "path": "/action/parameters/amount_cents",
    "predicate": {"predicate_id": "int_range",
                  "parameters": {"min": 1, "max": 1000000}}
  }],
  "policy_epoch": 43,
  "provenance_root": "sha256:...",
  "context_root": "sha256:...",
  "expires_at": 1800000600
}
\end{lstlisting}

\subsection{Field provenance and typed release}
\begin{lstlisting}
{
  "claim": {
    "path": "/action/parameters/amount_cents",
    "source_id": "source:external:17:2",
    "value_digest": "sha256:..."
  },
  "release": {
    "issuer": "key:release",
    "principal": "did:example:user:17",
    "task_root": "task:finance:17",
    "source_id": "source:external:17:2",
    "source_digest": "sha256:...",
    "target_path": "/action/parameters/amount_cents",
    "value_digest": "sha256:...",
    "predicate": {"predicate_id": "int_range",
                  "parameters": {"min": 1, "max": 20000}},
    "action_operation": "payment.transfer",
    "tool_id": "urn:mcp:finance:payments",
    "expires_at": 1800000300
  }
}
\end{lstlisting}

\subsection{Transformation witness and receipt}
\begin{lstlisting}
{
  "witness": {
    "issuer": "key:directory",
    "relation_id": "alias_resolution",
    "field_path": "/action/destination",
    "before_digest": "sha256:...",
    "after_digest": "sha256:...",
    "component_signer": "key:adapter",
    "contract_id": "contract:adapter:2",
    "parameters": {"alias": "alias:finance:17",
                   "resolved": "bankacct:merchant:17"}
  },
  "receipt": {
    "stage": "adapter",
    "role": "adapter",
    "input_digest": "sha256:...",
    "output_digest": "sha256:...",
    "changed_fields": ["/action/destination", "/representation"],
    "transformation_witness_ids": ["witness:..."]
  }
}
\end{lstlisting}

\subsection{Execution permit}
\begin{lstlisting}
{
  "principal": "did:example:user:17",
  "subject": "spiffe://example/agent/finance/4",
  "task_root": "task:finance:17",
  "grant_id": "grant:finance:17",
  "audience": "spiffe://example/finality/finance",
  "action_digest": "sha256:...",
  "bundle_digest": "sha256:...",
  "policy_id": "policy://example/runtime",
  "policy_epoch": 43,
  "nonce": "nonce:...",
  "idempotency_key": "task:...:sha256:...",
  "one_time": true,
  "expires_at": 1800000060
}
\end{lstlisting}

\section{Reproducibility and Reason Codes}
\label{app:reproducibility}

\subsection{Commands}
\needspace{18\baselineskip}
\begin{lstlisting}
python -m venv .venv
. .venv/bin/activate
python -m pip install -r requirements.txt

python -m pytest -q
python scripts/run_experiments.py --output results
python scripts/make_figures.py \
  --results results --output figures
cp figures/*.pdf paper/figures/
cd paper
latexmk -pdf -interaction=nonstopmode \
  -halt-on-error main.tex
\end{lstlisting}

Quick smoke mode uses:
\begin{lstlisting}
python scripts/run_experiments.py \
  --quick --output results_quick
\end{lstlisting}

The full main suite is deterministic. Latency depends on host, Python, and cryptographic-library versions. The manifest records scenario and run counts; raw CSV files expose terminal states, effect counts, first/second outcomes, reason strings, and proof sizes.

\subsection{Expected checks}

A clean artifact run should report 30 passing tests. For the full system, \code{summary.csv} should contain 2,560 attack instances, 128 contained fault--domain classes, effect ASR 0, benign auto-completion 1, ambiguous escalation 1, and lifecycle correctness 1. The exact timing columns are not invariant.

\needspace{27\baselineskip}
\subsection{Representative reason-code families}

\begin{table}[H]
\centering
\scriptsize
\begin{tabularx}{\columnwidth}{@{}p{0.43\columnwidth}X@{}}
\toprule
Reason family & Meaning \\
\midrule
\code{E\_UNTRUSTED\_ROOT} & Chain origin not in trusted ingress set \\
\code{E\_ROOT\_*\_EXCEEDED} & Root authority, scope, class, or field bound exceeded \\
\code{E\_PROVENANCE\_VALUE\_MISMATCH} & Resolved leaf differs from signed manifest claim \\
\code{E\_UNRELEASED\_FIELD} & Untrusted protected field needs validation/escalation \\
\code{E\_INVALID\_RELEASE} & Source, value, predicate, task, tool, or expiry check failed \\
\code{E\_UNAUTHORISED\_STAGE\_SIGNER} & Key is not configured for stage/contract \\
\code{E\_RECEIPT\_PRODUCER\_MISMATCH} & Receipt signer differs from output producer \\
\code{E\_MISSING\_TRANSFORM\_WITNESS} & Changed path lacks required relation evidence \\
\code{E\_TRANSFORM\_*} & Witness role, field, digest, relation, or subject invalid \\
\code{E\_GUARANTEE\_FALSE} & Contract output postcondition failed \\
\code{E\_ACTION\_SUBSTITUTION} & Sink action differs from permit-bound action \\
\code{E\_SUBJECT\_SUBSTITUTION} & Actual caller differs from permit subject \\
\code{E\_REVOKED\_AT\_FINALITY} & Grant or permit revoked after admission \\
\code{E\_UNMEDIATED\_PATH} & Protected effect attempted without permit \\
\bottomrule
\end{tabularx}
\end{table}

\needspace{18\baselineskip}
\subsection{Claim-to-artifact map}

\begin{table}[H]
\centering
\scriptsize
\begin{tabularx}{\columnwidth}{@{}p{0.35\columnwidth}X@{}}
\toprule
Claim & Artifact evidence \\
\midrule
Root and role enforcement & \code{core.py}; root/role regression tests \\
Leaf provenance & JSON Pointer tests; provenance substitution tests \\
Typed release & predicate/value/expiry tests and 300 utility cases \\
Legitimate transformation & alias witness tests; every normal trace \\
Effect containment & \code{raw\_results.csv}, \code{by\_fault.csv} \\
Ablations & \code{ablation\_raw.csv}, \code{ablation\_summary.csv} \\
Lifecycle correctness & first/second outcomes and effect-count columns \\
Performance & \code{performance.csv}, \code{scaling.csv} \\
\bottomrule
\end{tabularx}
\end{table}

\balance
\printbibliography

\end{document}